\colorlet{DarkRed}{red!60!black}
\colorlet{DarkGreen}{green!50!black}
\colorlet{DarkBlue}{blue!80!black}
\colorlet{DarkMagenta}{magenta!80!black}
\newcommand{\supp}{\mathrm{supp}}
\newcommand{\cost}{\mathrm{cost}}
\newcommand{\norm}[1]{\| #1 \|}
\newcommand{\onenorm}[1]{\norm{#1}_1}
\newcommand{\Z}{\mathbb{Z}}
\newcommand{\R}{\mathbb{R}}
\newcommand{\abs}[1]{| #1 |}
\newcommand{\lref}[1]{(\ref{#1})}
\newcommand{\set}[2]{\left\{\, #1 \,:\, #2 \,\right\}}
\newtheorem{theorem}{Theorem}
\newtheorem{lemma}{Lemma}
\newtheorem{corollary}{Corollary}
\newcommand{\ignore}[1]{}
\DeclareMathOperator{\argmin}{arg\,min}
\begin{document}


\title{Convergence of the Non-Uniform Physarum Dynamics}
\author{Andreas Karrenbauer\thanks{Max Planck Institute for Informatics, Saarland Informatics Campus.} \and Pavel Kolev$^*$ \and Kurt Mehlhorn$^*$}

\maketitle

\begin{abstract} The Physarum computing model is an analog computing model motivated by the network dynamics of the slime mold Physarum Polycephalum. In previous works, it was shown that it can solve a class of linear programs. We extend these results to a more general dynamics motivated by situations where the slime mold operates in a non-uniform environment. 
  
Let $c \in \Z^m_{> 0}$, $A \in \Z^{n\times m}$, and $b \in \Z^n$. We show under fairly general conditions that the non-uniform Physarum dynamics
  \[       \dot{x}_e = a_e(x,t) \left(\abs{q_e} - x_e\right) \]
  converges to the optimum solution $x^*$ of the weighted basis pursuit problem minimize $c^T x$ subject to $A f = b$ and $\abs{f} \le x$. Here, $f$ and $x$ are $m$-dimensional vectors of real variables,  $q$ minimizes the energy $\sum_e (c_e/x_e) q_e^2$ subject to the constraints $A q = b$ and $\supp(q) \subseteq \supp(x)$, and $a_e(x,t) > 0$ is the reactivity of edge $e$ to the difference $\abs{q_e} - x_e$ at time $t$ and in state $x$. Previously convergence was only shown for the uniform case $a_e(x,t) = 1$ for all $e$, $x$, and $t$.

  We also show convergence for the dynamics
  \[ \dot{x}_e = x_e  \left( g_e \left(\frac{\abs{q_e}}{x_e}\right) - 1\right),\]
where each $g_e$ is an increasing differentiable function with $g_e(1) = 1$ (satisfying some mild conditions). Previously, convergence was only shown for the special case of the shortest path problem on a graph consisting of two nodes connected by parallel edges. 
\end{abstract}

\section{Introduction}

The \emph{Physarum computing model} is an analog computing model motivated by the network dynamics of the slime mold Physarum polycephalum.  In wet-lab experiments, it was observed that the slime mold is apparently able to solve shortest path problems~\cite{Nakagaki:2000}. A mathematical model for the dynamic behavior of the slime was proposed in~\cite{Tero-Kobayashi-Nakagaki}. It models the slime network as an electrical network with time-varying resistors that react to the amount of electrical current flowing through them. A more general model for the dynamics was introduced in~\cite{PhysarumMinimumRiskPath} to deal with situations in which the slime has to operate in a non-uniform environment. This more general dynamics is the subject of this paper.  In Section~\ref{Biological Background}, we give more details on the biological background and also survey the theoretical work on the Physarum dynamics. 

The \emph{weighted basis pursuit problem} asks to find the  minimal weighted one-norm solution of a linear system. Formally, 
\begin{equation}\label{LP}
  \text{minimize }   c^T x\text{ subject to } A f = b \text{ and }\abs{f} \le x,
  \end{equation}
  where $c \in \Z^m_{> 0}$, $A \in \Z^{n\times m}$, $b \in \Z^n$, and $x$ and $f$ are $m$-dimensional vectors of real variables. The absolute-value operator is applied componentwise. The matrix $A$ is assumed to have full row-rank; this implies $n \le m$. For simplicity, we also assume that any two basic feasible solutions of $Af = b$ have distinct cost\footnote{A basic feasible solution of $Af = b$ has the form $f = (f_B,f_{\overline{B}})$, where $B$ is a subset of $[m]$ of size $n$, $\overline{B} = [m] \setminus B$,  the submatrix $A_B$ of $A$ is invertible, $f_B = A_B^{-1}b$, and $f_{\overline{B}} = 0$. The cost of such a solution is $c^T \abs{f}$.}; in particular, the optimal solution $(f^*,x^*)$  to~\lref{LP} is unique. We index the rows of $A$ by $i$ and the columns of $A$ by $e$ and, for historical reasons (see Subsection~\ref{Shortest Paths}), refer to the rows as nodes and the columns as edges. 

The \emph{Physarum dynamics} evolves a vector $x(t) \in \R^m_{\ge 0}$ according to the dynamics
\begin{equation}\label{PD}   \dot{x} = \abs{q} - x, \end{equation}
where $q$ is the minimum energy feasible solution of $Af = b$ according to the resistances $r_e = c_e/x_e$:
\[   q = \argmin_f \set{\sum_e (c_e/x_e) f_e^2}{A f = b \text{ and } f_e = 0 \text{ whenever } x_e = 0}.\]
The Physarum dynamics was introduced by biologists~\cite{Tero-Kobayashi-Nakagaki} as a model of the behavior of the slime mold Physarum polycephalum. We discuss the biological background in Section~\ref{Biological Background}. 

In~\cite{BeckerBonifaciKarrenbauerKolevMehlhorn} and~\cite{Facca-Cardin-Putti} it was shown that the Physarum dynamics~\lref{PD} can solve the \emph{weighted basis pursuit problem}~\lref{LP}.

\begin{theorem}[\cite{BeckerBonifaciKarrenbauerKolevMehlhorn,Facca-Cardin-Putti}] Assume a strictly positive starting vector $x(0) \in R^m_{> 0}$. Then the solution $x(t)$ to the dynamics~\lref{PD} is defined on $[0,\infty)$, $x(t) > 0$ for all $t$, and $x(t)$ and $\abs{q(t)}$ converge to the optimal solution $x^*$ of~\lref{LP}. \label{uniform convergence}
\end{theorem}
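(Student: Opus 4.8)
The plan is to prove convergence via a Lyapunov (energy) argument, which is the standard and most robust route for dynamics of this type. The natural candidate Lyapunov function is the cost $V(x) = c^T x = \sum_e c_e x_e$, and I would first try to show it is monotonically decreasing along trajectories. Along the dynamics~\lref{PD}, $\dot V = \sum_e c_e \dot x_e = \sum_e c_e(\abs{q_e} - x_e)$. To control this, I would invoke the variational characterization of $q$: since $q$ minimizes the energy $\sum_e (c_e/x_e) q_e^2$ subject to $Aq = b$, standard electrical-network (Thomson-type) reasoning gives that the energy of $q$ equals $\sum_e (c_e/x_e) q_e^2 = b^T p$ for the associated potential $p$ solving the Laplacian system, and that this quantity is a lower bound on the energy of any other feasible flow. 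Comparing the energy of $q(x)$ at state $x$ with the energy of the optimal flow $f^*$ (which is feasible since $\abs{f^*} \le x^*$) should yield the key inequality controlling the sign of $\dot V$.

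The central technical ingredients I would assemble, before the convergence argument itself, are: (i) \emph{existence and uniqueness} of the solution $x(t)$ on $[0,\infty)$, together with strict positivity $x(t) > 0$ whenever $x(0) > 0$. Positivity is crucial because $q$ is only defined through resistances $r_e = c_e/x_e$, so $x_e = 0$ must be handled carefully; I expect one shows $\dot x_e = \abs{q_e} - x_e \ge -x_e$, so $x_e(t) \ge x_e(0)e^{-t} > 0$, keeping the trajectory in the positive orthant and away from the boundary in finite time. (ii) A \emph{boundedness} estimate keeping $x(t)$ in a compact set, which combined with (i) gives a well-defined flow to which LaSalle's invariance principle applies. (iii) The identification of the \emph{invariant set} where $\dot V = 0$: I would argue that equilibria satisfy $\abs{q_e} = x_e$ for all $e$, and then show the only such equilibrium consistent with optimality of the energy is $x^*$, using the assumed uniqueness of the optimal basic feasible solution.

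Concretely, the key computation I anticipate is to write, using the potentials,
\[
  \dot V = \sum_e c_e \abs{q_e} - \sum_e c_e x_e,
\]
and then bound $\sum_e c_e \abs{q_e}$ from above by $c^T x$ via a Cauchy--Schwarz step: $\sum_e c_e \abs{q_e} = \sum_e \sqrt{c_e x_e}\cdot\sqrt{c_e/x_e}\,\abs{q_e} \le \sqrt{\sum_e c_e x_e}\,\sqrt{\sum_e (c_e/x_e) q_e^2}$. The remaining task is to bound the energy $\sum_e (c_e/x_e)q_e^2$ by $c^T x$ as well (or by $c^T x^*$ near the optimum), which is where the optimality of $q$ as a minimum-energy flow and the feasibility of $f^*$ enter. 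This should give $\dot V \le 0$ with equality characterizing the fixed points. I would then strengthen monotonicity into convergence: either directly via LaSalle on the compact invariant set, or by exhibiting a strict decrease bounded away from zero outside any neighborhood of $x^*$.

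The main obstacle I expect is \emph{not} monotonicity of the cost but rather \emph{pinning down the limit} and ruling out convergence to a nonoptimal equilibrium or to the boundary of the orthant. The dynamics can have spurious equilibria at states $x$ whose support omits some edge of the true optimal support, where $q$ restricted to that support is itself optimal; showing these are not attracting (or that trajectories starting strictly positive never reach them) is the delicate part. I would address this by combining the strict positivity lower bound $x_e(t) \ge x_e(0)e^{-t}$ (which prevents collapse of the support in finite time) with a careful analysis near such boundary equilibria, likely using the assumption that distinct basic feasible solutions have distinct cost to guarantee that $x^*$ is the unique cost-minimizing equilibrium, so that the monotone cost decrease can only terminate at $x^*$. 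Establishing this last uniqueness-and-attractivity step rigorously, rather than the energy bookkeeping, is where I would concentrate the effort.
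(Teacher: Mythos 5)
Your overall architecture (existence and positivity, boundedness, a Lyapunov function, LaSalle, then a separate argument to exclude non-optimal limits) matches the paper's, and your items (i) and (ii) are handled exactly as you predict ($x_e(t) \ge x_e(0)e^{-t}$ and $\abs{q_e} \le D\onenorm{b}$). But your choice of Lyapunov function is wrong, and the gap is not repairable along the lines you sketch. The cost $V(x)=c^Tx$ is \emph{not} monotone along $\dot x = \abs{q}-x$: take $n=m=1$, $A=(1)$, $b=1$, $c=1$, $x(0)=1/10$; then $q\equiv 1$ and $\dot V=\dot x = 1-x>0$, so the cost strictly increases. Your Cauchy--Schwarz step reduces $\dot V\le 0$ to the inequality $E_x(q)=\sum_e (c_e/x_e)q_e^2\le c^Tx$, and this is exactly what fails: the energy blows up as capacities on flow-carrying edges shrink, and Thomson's principle only bounds $E_x(q)$ by $E_x(f^*)=\sum_e c_e (f^*_e)^2/x_e$, which is itself unbounded (it is at most $\cost(x^*)$ only when $x\ge x^*$, which you cannot assume along the trajectory). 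The paper's fix is to take the \emph{sum} $L(x)=E_x(q)+c^Tx$ as in \lref{Lyapunov Function A}; writing $\lambda_e=\abs{A_e^Tp}/c_e$ one gets the exact identity $\frac{d}{dt}L=-\sum_e c_ex_e(\lambda_e+1)(\lambda_e-1)^2\le 0$ (Lemma~\ref{Lyapunov Lemma} with $a_e\equiv 1$): the problematic growth of the cost term is cancelled by the decrease of the energy term, and no inequality between the two summands is ever needed.

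Second, you correctly identify ``pinning down the limit'' as the delicate step but do not supply the mechanism; ``the monotone cost decrease can only terminate at $x^*$'' is not an argument, because every vector $\abs{f}$ with $f$ a basic feasible solution is a fixed point with $\dot L=0$, these are isolated (Lemma~\ref{fixed points}), and LaSalle only gives convergence to \emph{one} of them. The paper excludes the non-optimal ones with a second functional, $W(x)=\sum_e x_e^*c_e\ln(x_e/B)\le 0$ where $B$ bounds the trajectory: if $x(t)\to z$ with $\cost(z)>\cost(x^*)$, then eventually $E_x(q)\ge \cost(x^*)+\delta$, and the potential identity $q_e=(x_e/c_e)A_e^Tp$ gives $\dot W=\sum_e x_e^*c_e(\abs{q_e}-x_e)/x_e\ge -\cost(x^*)+\sum_e x_e^*\abs{A_e^Tp}\ge\delta>0$ by \lref{eq:costxStarDelta}, so $W\to\infty$, contradicting $W\le 0$. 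This logarithmic-barrier argument is the missing idea; without it (or an equivalent), your proof does not close.
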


Actually, the papers~\cite{BeckerBonifaciKarrenbauerKolevMehlhorn,Facca-Cardin-Putti} show convergence under the more general condition $c \ge 0$ and $c^T \abs{f} > 0$ for any $f$ in the kernel of $A$, but here we do not need this generality.

In this paper, we consider the more general dynamics \lref{NUPD} and~\lref{RefinedDynamics}. In the dynamics~\lref{NUPD}, the edges react with different speed to differences between minimum energy solution and capacity, i.e., 
\begin{equation}\label{NUPD}   \dot{x}_e = a_e(x,t)  \left(\abs{q_e} - x_e\right),  \end{equation}
where $a_e(x,t) \ge 0$ is the reactivity of edge $e$ at time $t$, i.e., the edges no longer react uniformly to differences between $\abs{q}$ and $x$, but the reactivity depends on the edge, the current state, and the time. We refer to  \lref{NUPD} as the \emph{non-uniform Physarum dynamics}. The special case that $a_e(x,t)$ is a positive constant for each edge was introduced in~\cite{PhysarumMinimumRiskPath} to model the behavior of Physarum polycephalum in non-uniform environments; see Subsection~\ref{Minimum Risk Paths}.

In Section~\ref{Proofs}, we prove our main technical contribution for the dynamics \lref{NUPD}:

\begin{theorem}\label{main theorem}
  Assume $x(0) > 0$, $0 \le a_e(x,t) \le C$ for all $e$, $x$, and $t$ and some constant $C$, and $a_e(x,t)$ is Lipschitz-continuous. Then: 
  \begin{compactenum}[(a)]
  \item The dynamics \lref{NUPD} has a unique solution $x(t) > 0$ for $t \in [0,\infty)$.
  \item The function
    \begin{equation}\label{Lyapunov Function A}
      L(x,t) = \sum_e (c_e/x_e) q_e^2 + \sum_e c_e x_e\end{equation}
    is a Lyapunov function for the dynamics \lref{NUPD}, i.e., $\frac{d}{dt} L(x,t) \le 0$ for all $t \in [0,\infty)$. Moreover, $\frac{d}{dt} L(x,t) = 0$ if and only if for all $e$: either $a_e(x,t) = 0$ or $x_e(t) = 0$ or $\abs{q_e} = x_e$.
  \item If, in addition, $a_e(x,t) \ge \epsilon$ for some positive $\epsilon$ and all $e$, $x$, and $t$,
 then $\bar{x} \ge 0$ is a fixed point of~\lref{NUPD} if and only if $\bar{x} = \abs{f}$ for a basic feasible solution of $Af = b$.
  \item Under the same additional assumption as in (c), $x(t)$ and $\abs{q(t)}$ converge to a fixed point of \lref{NUPD} as $t$ goes to infinity. 
    \item If, in addition, $a_e(x,t)$ does not depend on $x$ and $\frac{d}{dt}{a_e}(t) \le 0$ for all $e$ and $t$, then $x(t)$ and $\abs{q(t)}$ converge to $x^*$ as $t$ goes to infinity. In particular, this holds true if $a_e(t)$ is a positive constant for all $e$.
  \end{compactenum}
\end{theorem}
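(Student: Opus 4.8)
The plan is to treat the five parts in sequence, since each feeds the next.

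For (a), I would first establish local existence and uniqueness. For $x > 0$ every edge lies in $\supp(x)$, so $q$ is the electrical flow $q = \diag(x/c)A^T(A\diag(x/c)A^T)^{-1}b$, a rational and hence locally Lipschitz function of $x$ on the open positive orthant; together with the Lipschitz continuity and boundedness of $a_e(x,t)$ this makes the right-hand side of \lref{NUPD} locally Lipschitz, and Picard--Lindel\"of gives a unique local solution. Positivity is immediate from the differential inequality $\dot{x}_e = a_e(\abs{q_e} - x_e) \ge -a_e x_e \ge -C x_e$, which yields $x_e(t) \ge x_e(0)e^{-Ct} > 0$ on the whole interval of existence, so the solution never reaches the boundary in finite time. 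For global existence I would invoke part (b): once $\frac{d}{dt}L \le 0$ is known, $\sum_e c_e x_e \le L \le L(x(0),0)$ bounds $x$ from above, so on any finite interval the trajectory stays in a compact subset of the positive orthant, and the standard continuation argument extends it to $[0,\infty)$.

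Part (b) is the computational core. Writing $\mathcal{E}(x) = \sum_e (c_e/x_e)q_e^2$ for the energy of the electrical flow, the key fact is the envelope identity $\frac{\partial \mathcal{E}}{\partial x_e} = -c_e q_e^2/x_e^2$: because $q$ minimizes $\sum_e(c_e/x_e)f_e^2$ over the affine space $Af=b$, differentiating the minimum value in $x_e$ lets one hold $q$ fixed to first order. Hence
\[ \frac{d}{dt}L = \sum_e c_e\left(1 - \frac{q_e^2}{x_e^2}\right)\dot{x}_e = -\sum_e c_e\, a_e(x,t)\,\frac{(x_e - \abs{q_e})^2(x_e + \abs{q_e})}{x_e^2}, \]
after substituting $\dot{x}_e = a_e(\abs{q_e}-x_e)$ and factoring $1 - q_e^2/x_e^2 = (x_e-\abs{q_e})(x_e+\abs{q_e})/x_e^2$. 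Every summand is nonnegative, so $\frac{d}{dt}L \le 0$, with equality exactly when each summand vanishes, i.e.\ for each $e$ either $a_e(x,t)=0$, or $x_e=0$ (forcing $q_e=0$ by the support constraint), or $\abs{q_e}=x_e$.

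For (c), the hypothesis $a_e \ge \epsilon > 0$ removes the $a_e=0$ alternative, so $\bar{x}\ge 0$ is a fixed point iff $\abs{q_e}=x_e$ for all $e$ --- the same characterization as in the uniform case. I would then argue that the support $S=\supp(\bar x)$ must index linearly independent columns of $A$: if $Az=0$ for some $z\neq 0$ with $\supp(z)\subseteq S$, then $\bar q + tz$ stays feasible with unchanged sign pattern for small $t$ and, since the optimality potentials give $c_e\,\mathrm{sign}(\bar q_e)=(A^T\phi)_e$ on $S$, has cost $c^T\abs{\bar q + tz}$ constant in $t$; pushing $t$ until a coordinate vanishes produces two distinct basic feasible solutions of equal cost, contradicting the distinct-cost assumption. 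Thus $S$ extends to a basis and $\bar x=\abs{f}$ for a basic feasible solution; conversely, on the support of a basic feasible solution the column independence makes the feasible flow unique, so it equals $\bar q$ and $\bar x=\abs{\bar q}$ is a fixed point. For (d) I would combine (b) with the discreteness of the fixed-point set. Since $L$ is nonincreasing and bounded below, $L(x(t))$ converges, so $\int_0^\infty(-\frac{d}{dt}L)\,dt < \infty$; using $a_e \ge \epsilon$ this bounds $\int_0^\infty \sum_e c_e (x_e-\abs{q_e})^2(x_e+\abs{q_e})/x_e^2\,dt$. As the fixed points form a finite (hence discrete) set by (c), a transit-time argument --- a bounded trajectory cannot shuttle infinitely often between two distinct fixed points while the dissipation integral stays finite, because each transit keeps the integrand bounded away from $0$ for a definite amount of time --- forces $x(t)$, and with it $\abs{q(t)}$, to converge to a single fixed point.

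Part (e) is where I expect the real difficulty. From (d) we have $x(t)\to\bar x = \abs{f}$ for some basic feasible solution, and it remains to exclude every non-optimal $\bar x$. My plan is an instability argument: if $\bar x\neq x^*$, then $\bar x$ is a non-optimal vertex, so the vertex potentials $\bar\phi$ (solving $A_B^T\bar\phi = (c_e\,\mathrm{sign}(f_e))_{e\in B}$) are dual-infeasible at some edge $e^*$, i.e.\ $\abs{A_{e^*}^T\bar\phi} > c_{e^*}$. Since for $x>0$ the electrical flow satisfies $q_e = (x_e/c_e)(A^T\phi(x))_e$ with potentials $\phi(x)\to\bar\phi$, the ratio $\abs{q_{e^*}}/x_{e^*}\to\abs{A_{e^*}^T\bar\phi}/c_{e^*} > 1$, so $\abs{q_{e^*}}\ge(1+\delta)x_{e^*}$ on a neighborhood $U$ of $\bar x$, whence $\dot{x}_{e^*} = a_{e^*}(\abs{q_{e^*}}-x_{e^*})\ge \epsilon\delta\, x_{e^*} > 0$ on $U$; once the trajectory enters and stays in $U$, $x_{e^*}$ grows and cannot converge to $\bar x_{e^*}=0$, a contradiction, so $\bar x = x^*$. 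The hard part will be making the sensitivity estimate uniform near $\bar x$, in particular at degenerate vertices, where the limiting potentials are not pinned down by the support alone; I expect this is exactly where the extra hypotheses of (e) enter, since $a_e$ being independent of $x$ with $\frac{d}{dt}a_e \le 0$ forces each $a_e(t)$ to converge to a limit $a_e^\infty \ge \epsilon$, rendering the dynamics asymptotically autonomous and legitimizing the local analysis around $\bar x$ and the consistent selection of the improving edge.
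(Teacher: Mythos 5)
Parts (a)--(d) of your proposal are essentially correct and run close to the paper, with three harmless variations: you get the upper bound on $x(t)$ from the monotonicity of $L$ rather than from the a priori bound $\abs{q_e}\le D\onenorm{b}$ (the paper's route keeps (a) independent of (b)); you derive $\tfrac{d}{dt}\sum_e (c_e/x_e)q_e^2=-\sum_e (c_eq_e^2/x_e^2)\dot x_e$ from the envelope theorem instead of differentiating \lref{definition of p}, which is the same identity since $q_e=(x_e/c_e)A_e^Tp$; and in (c) you replace the paper's appeal to the finite basis theorem (Lemma~\ref{sign-compatible representation}) by a perturbation along a sign-compatible kernel direction. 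The latter can be completed (the cost is constant and coercive on the resulting sign-constrained polyhedron, so that polyhedron is bounded and has two distinct vertices, i.e.\ two distinct basic feasible solutions of equal cost), although as written ``pushing $t$ until a coordinate vanishes'' does not immediately produce \emph{basic} solutions. Your (d) substitutes a dissipation-integral/transit-time argument for the paper's invocation of LaSalle; both require the same care about continuity of the dissipation rate on the closure of the orbit, and either is acceptable at this level of detail.

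Part (e) is where the proposal genuinely breaks down, and you have put your finger on the failure point without repairing it. Your instability argument needs a single edge $e^*$ with $\abs{A_{e^*}^T\bar\phi}>c_{e^*}$ for the \emph{limit} potentials $\bar\phi$, hence it needs $\phi(x(t))$ to converge. That holds when the limit fixed point $z=\abs{f}$ is a nondegenerate vertex, since then $A\bar R^{-1}A^T=\sum_{e\in\supp(f)}(z_e/c_e)A_eA_e^T$ is invertible; but the distinct-cost assumption concerns distinct solution \emph{vectors} and does not exclude vertices with fewer than $n$ nonzero entries, and at such a point $A\bar R^{-1}A^T$ is singular, the potentials need not converge, and the ``improving edge'' may change along the trajectory, so no single $x_{e^*}$ is forced to grow. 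Moreover, your argument, where it does work, uses none of the additional hypotheses of (e) --- if it were a complete proof it would establish convergence to $x^*$ already under the assumptions of (d), a question the paper explicitly lists as open --- and asymptotic autonomy of $a_e(t)$ does nothing to resolve degeneracy of the limit vertex. The paper's proof avoids the issue entirely with a global potential function: for $W(x)=\sum_e (x_e^*c_e/a_e)\ln(x_e/B)\le 0$ one gets $\dot W\ge -\cost(x^*)+\sum_e x_e^*\abs{A_e^Tp}\ge E_x(q)-\cost(x^*)\ge\delta$ by \lref{eq:costxStarDelta}, using only that $E_x(q)\to\cost(z)>\cost(x^*)$ and never the convergence of $p$; the hypotheses that $a_e$ is independent of $x$ and monotone in $t$ are exactly what make the extra term $\sum_e x_e^*c_e(-\dot a_e/a_e^2)\ln(x_e/B)$ in $\dot W$ nonnegative. (Incidentally, the sign condition $\tfrac{d}{dt}a_e\le 0$ in the theorem statement is inconsistent with the condition $\dot a_e\ge 0$ actually used in the paper's proof and repeated in the open-problems section.)
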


The proof of part (a) is standard and part (c) was shown in~\cite{BeckerBonifaciKarrenbauerKolevMehlhorn}. The Lyapunov function in part (b) was introduced in~\cite{Facca-Daneri-Cardin-Putti}. In ~\cite{Facca-Cardin-Putti} it was shown to be a Lyapunov function for the uniform case, i.e., $a_e(x,t) = 1$ for all $e$, $x$, and $t$. We observe that the Lyapunov function also works for the non-uniform dynamics. Part (d) follows easily from parts (b) and (c). Finally, the proof of part (e) is inspired by~\cite{BeckerBonifaciKarrenbauerKolevMehlhorn}.

The function $L(x,t)$ is a Lyapunov function for the Physarum dynamics under very general conditions. Essentially, the only requirement is that $\dot{x}_e$ has the same sign as $\abs{q_e} - x_e$.
For the existence of a solution with domain $[0,\infty)$, we also need that $\dot{x}_e/(\abs{q_e} - x_e)$ is bounded. For the convergence to a fixed point, we need in addition that $\dot{x}_e/(\abs{q_e} - x_e)$ is bounded away from zero.

In the dynamics~\lref{RefinedDynamics}, each edge has its own transfer function that determines how it  reacts to the ratio of flow and capacity being larger or smaller than one, i.e.,
\begin{equation}\label{RefinedDynamics}
  \dot{x}_e = x_e  \left( g_e\left( \frac{\abs{q_e}}{x_e}\right) - 1 \right) \quad \text{for all $e \in [m]$}, \end{equation}
where the response function $g_e: \R_{\ge 0} \rightarrow \R_{\ge 0}$ is assumed to be an increasing differentiable function satisfying $g_e(1) = 1$. Bonifaci introduced this model in~\cite{Bonifaci-RefinedModel} in order to deal with the larger class of response functions proposed in the biological literature. For the shortest path problem in a network of parallel links\footnote{The shortest path problem is a min-cost flow problem where we want to send one unit of flow between two distinguished nodes. For the case of parallel links, the graph has exactly two nodes and all edges run between these nodes.}, \cite{Bonifaci-RefinedModel} shows convergence to the shortest path. Bonifaci assumes the same response function for every edge, but his proof actually works for response functions depending on the edge. 

In Section~\ref{Bonifaci's Refined Model}, we prove our main technical contribution for the dynamics \lref{RefinedDynamics}:

\begin{theorem}\label{Bonifaci main theorem}
  Assume $g_e: \R_{\ge 0} \rightarrow \R_{\ge 0}$ is an  increasing and differentiable function
  satisfying $g_e(1) = 1$, for all $e$. Then,
  \begin{compactenum}[(a)]
  \item The dynamics \lref{RefinedDynamics} has a unique solution $x(t) > 0$ for $t \in [0,\infty)$.
    \item $x \ge 0$ is a fixed point of~\lref{RefinedDynamics} if $x = \abs{f}$ for a basic feasible solution of~\lref{LP}. 
  \item The function
    \begin{equation}\label{Lyapunov Function B}
      L(x,t) = \sum_e (c_e/x_e) q_e^2 + \sum_e c_e x_e\end{equation}
    is a Lyapunov function for the dynamics \lref{RefinedDynamics}, i.e., $\frac{d}{dt} L(x,t) \le 0$ for all $t \in [0,\infty)$. Moreover, $\frac{d}{dt} L(x,t) = 0$ if and only if $x$ is a fixed point.
    \item $x(t)$ and $\abs{q(t)}$ converge to a fixed point of \lref{RefinedDynamics}. 
    \item If, in addition, $g_e(y) \ge 1 + \alpha(y - 1)$ for some $\alpha > 0$ and all $e$ and $y$,
    then $x(t)$ and $\abs{q(t)}$ converge to $x^*$ as $t$ goes to infinity. 
  \end{compactenum}
\end{theorem}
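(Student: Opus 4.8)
The plan is to read the refined dynamics \lref{RefinedDynamics} as a non-uniform dynamics \lref{NUPD} with state-dependent reactivity
\[ a_e(x) = \frac{g_e(\abs{q_e}/x_e)-1}{\abs{q_e}/x_e - 1}, \]
so that $\dot x_e = a_e(x)(\abs{q_e}-x_e)$ and $a_e(x)$ is the difference quotient of $g_e$ between $\abs{q_e}/x_e$ and $1$. Since $g_e$ is increasing with $g_e(1)=1$, this quotient is positive and $\dot x_e$ has the same sign as $\abs{q_e}-x_e$, which is exactly the condition the introduction isolates as sufficient for $L$ to be a Lyapunov function. I would then dispatch (a)--(c) along this template. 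For (a), local existence and uniqueness follow from Picard--Lindel\"of on the open positive orthant, where $q(x)$ is a smooth function of $x$ (it solves a weighted Laplacian system) and $g_e$ is differentiable; positivity follows from the multiplicative form, since $\frac{d}{dt}\ln x_e = g_e(\abs{q_e}/x_e)-1 \ge -1$ (using $g_e \ge 0$) gives $x_e(t) \ge x_e(0)e^{-t} > 0$; and once (c) is available, the bound $\sum_e c_e x_e \le L(x(t)) \le L(x(0))$ keeps $x$ bounded, ruling out blow-up and extending the solution to $[0,\infty)$. For (b), if $\bar x = \abs{f}$ for a basic feasible solution with basis $B$, then $A_B$ is invertible, so the only flow supported on $B$ with $Aq=b$ is $f$ itself; hence $\abs{q_e}=\bar x_e$ on $B$ and $\bar x_e = q_e = 0$ off $B$, which makes every factor of $\dot{\bar x}_e$ vanish. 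For (c), I would compute $\frac{d}{dt}L$ using the envelope identity $\partial \mathcal E/\partial x_e = -(c_e/x_e^2)q_e^2$ for the minimum energy $\mathcal E(x)=\sum_e (c_e/x_e)q_e^2$, obtaining
\[ \frac{d}{dt}L = -\sum_e \frac{c_e(x_e+\abs{q_e})}{x_e^2}\,(\abs{q_e}-x_e)\,\dot x_e \le 0, \]
where each summand is nonpositive by the sign property, with equality iff $\abs{q_e}=x_e$ for all $e$, i.e.\ iff $x$ is a fixed point.

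For (d) I would invoke LaSalle's invariance principle. The trajectory is bounded (from (a)/(c)) and $L$ is nonincreasing and bounded below by $0$, so the $\omega$-limit set is a nonempty, compact, connected, invariant set on which $\frac{d}{dt}L=0$, hence a set of fixed points. The key observation is that $\bar x \ge 0$ is a fixed point of \lref{RefinedDynamics} exactly when $\abs{q_e}=\bar x_e$ for all $e$ (the factor $\bar x_e = 0$ forces $q_e=0$, and $g_e$ increasing forces $\abs{q_e}=\bar x_e$ otherwise) --- this is literally the fixed-point condition of \lref{NUPD} with $a_e \ge \epsilon$, so by Theorem~\ref{main theorem}(c) the fixed points are precisely the basic feasible solutions. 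These are finite in number, hence isolated, so the connected $\omega$-limit set is a single point $\bar x$, and $x(t) \to \bar x$, $\abs{q(t)} \to \bar x$.

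Part (e) is the crux. By (d), $x(t) \to \bar x = \abs{f}$ for some basic feasible $f$ with basis $B$; I want $\bar x = x^*$, and since costs are distinct it suffices to rule out a suboptimal $f$. Assume $f$ is suboptimal. The node potentials $\pi$ defined by $(A^T\pi)_e = c_e\,\mathrm{sgn}(f_e)$ for $e\in B$ are the LP dual associated with $B$, and suboptimality means dual infeasibility at some $e^* \notin B$: $\abs{(A^T\pi)_{e^*}} > c_{e^*}$. For $x$ near $\bar x$ with $x_{e^*}>0$ small, the minimum energy flow sends current through $e^*$ proportional to its conductance times the potential drop, $q_{e^*} \approx (A^T\pi)_{e^*}\,x_{e^*}/c_{e^*}$ (a standard electrical perturbation as the high-resistance edge $e^*$ is reinserted), so $\abs{q_{e^*}}/x_{e^*} \to \abs{(A^T\pi)_{e^*}}/c_{e^*} =: 1+\delta$ with $\delta>0$ fixed. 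Then in a neighborhood of $\bar x$,
\[ \frac{d}{dt}\ln x_{e^*} = g_{e^*}\!\left(\frac{\abs{q_{e^*}}}{x_{e^*}}\right)-1 \ge \alpha\left(\frac{\abs{q_{e^*}}}{x_{e^*}}-1\right) \ge \tfrac{1}{2}\alpha\delta > 0, \]
using the hypothesis $g_{e^*}(y)\ge 1+\alpha(y-1)$. Thus $x_{e^*}(t)$ is bounded away from $0$, contradicting $x_{e^*}(t)\to \bar x_{e^*}=0$. Hence $\bar x = x^*$.

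The main obstacle is making this part-(e) instability argument rigorous: one must control the minimum energy flow $q$ uniformly as the capacity $x_{e^*}$ (and possibly other off-basis capacities) tends to $0$, which is a Schur-complement limit of the weighted Laplacian, and convert the first-order estimate $q_{e^*}\approx (A^T\pi)_{e^*}x_{e^*}/c_{e^*}$ into a genuine neighborhood bound $\abs{q_{e^*}}/x_{e^*}\ge 1+\delta/2$ valid along the whole tail of the trajectory, not merely in the strict limit. This is precisely where the correspondence between LP reduced costs and the energy-minimizing flow must be established quantitatively, and where the linear lower bound $\alpha>0$ on $g_{e^*}$ is essential: without it the repulsion from a suboptimal fixed point can be too weak to prevent convergence, which is why the weaker hypotheses of (a)--(d) only yield convergence to \emph{some} fixed point.
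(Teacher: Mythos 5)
Parts (a)--(d) of your proposal are correct and follow essentially the paper's own route: the same a priori bounds for existence, the same one-line fixed-point verification, the same derivative computation for $L$ (your factored form $-\sum_e \frac{c_e(x_e+\abs{q_e})}{x_e^2}(\abs{q_e}-x_e)\dot x_e$ is algebraically identical to the paper's $-\sum_e c_e x_e(\lambda_e^2-1)(g_e(\lambda_e)-1)$ with $\lambda_e=\abs{A_e^Tp}/c_e=\abs{q_e}/x_e$), and the same LaSalle-plus-isolated-fixed-points argument for (d). One small point: the ``only if'' in the equality condition of (c), and hence the identification of $\{\dot L=0\}$ with the fixed-point set, needs $g_e$ to be \emph{strictly} increasing (otherwise $g_e(\lambda_e)=1$ does not force $\lambda_e=1$); the paper reads ``increasing'' that way.

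Part (e) is where you genuinely diverge, and where your argument has a real gap --- one you partly flag yourself. Your plan is a local instability analysis at the suboptimal fixed point $\bar x=\abs{f}$: find an off-basis edge $e^*$ with violated reduced cost and show $\abs{q_{e^*}}/x_{e^*}\ge 1+\delta/2$ on a whole neighborhood of $\bar x$. Making this work requires (i) proving that the potentials $p(x)$ converge to the LP dual $\pi$ as $x\to\bar x$, which is a singular limit of $AR^{-1}A^Tp=b$ (all off-basis conductances $x_e/c_e$ vanish simultaneously, and for a degenerate basis with $\supp(f)\subsetneq B$ the limit matrix is not invertible, so $\pi$ is not even well defined by your recipe); (ii) establishing that every suboptimal basic feasible solution has a \emph{strictly} violated dual constraint $\abs{A_{e^*}^T\pi}>c_{e^*}$, i.e., the exact-complementary-slackness characterization of optimality for weighted basis pursuit; and (iii) upgrading the limit to a bound valid uniformly along the tail of the trajectory. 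None of these steps is carried out, and (i) is delicate. The paper avoids all three with a global potential-function argument: with $W(x)=\sum_e x_e^*c_e\ln x_e$ one gets $\dot W=\sum_e x_e^*c_e\bigl(g_e(\abs{q_e}/x_e)-1\bigr)\ge \alpha\bigl(\sum_e x_e^*\abs{A_e^Tp}-\cost(x^*)\bigr)$ from the hypothesis $g_e(y)\ge 1+\alpha(y-1)$, and then the inequality $\sum_e x_e^*\abs{A_e^Tp}\ge \sum_e f_e^*A_e^Tp=b^Tp=E_x(q)\ge\cost(x^*)+\delta$ --- which uses nothing about $p$ beyond its defining identity, hence no limit analysis --- yields $\dot W\ge\alpha\delta>0$ for all large $t$, so $W\to\infty$, contradicting boundedness of $x$. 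You should either adopt this argument or supply (i)--(iii), which is substantially harder than the rest of the proof.
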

The proof of part (a) is standard and part (b) was shown in~\cite{BeckerBonifaciKarrenbauerKolevMehlhorn}. The Lyapunov function in part (c) was introduced in~\cite{Facca-Daneri-Cardin-Putti}. We observe that it also applies to the dynamics~\lref{RefinedDynamics}. Part (d) follows easily from part (c). Finally, the proof of part (e) is inspired by~\cite{BeckerBonifaciKarrenbauerKolevMehlhorn}. Theorem~\ref{Bonifaci main theorem}
also holds when the function $g_e$ depends on the time and the state. 

Nature does not compute exactly, i.e., one should not expect that in a biological system $\dot{x}_e$ is exactly equal to $\abs{q_e} - x_e$ or to $x_e ( g_e(\abs{q_e}/x_e) - 1 )$. Rather, there will be a noise. Our results show that the dynamics~\lref{NUPD} and~\lref{RefinedDynamics} are fairly robust against noise, i.e., variations in $a_e(x,t)$ and $g_e(y)$. 

The rest of the paper is organized as follows.
In Section~\ref{Biological Background}, we review the biological background and related work.
In Section~\ref{Proofs}, we prove Theorem~\ref{main theorem}.
In Section~\ref{Bonifaci's Refined Model}, we prove Theorem~\ref{Bonifaci main theorem}. 
In Section~\ref{Open Problems}, we state some open problems.

\section{Background}\label{Biological Background}

\subsection{The Shortest Path Experiment}\label{Shortest Paths}

\begin{figure}[t]
\begin{center}
\includegraphics[width=0.4\textwidth]{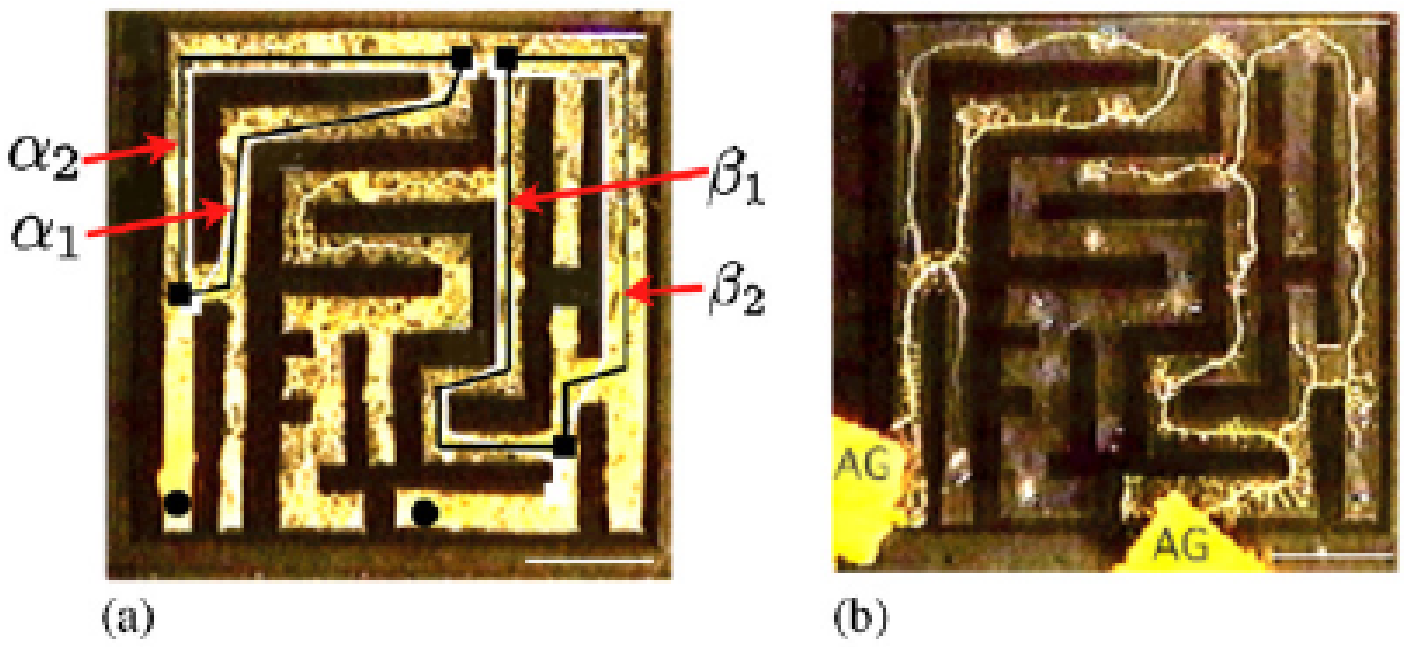}
\hspace{0.3cm}\vspace{-0.1em}
\includegraphics[width=0.4\textwidth]{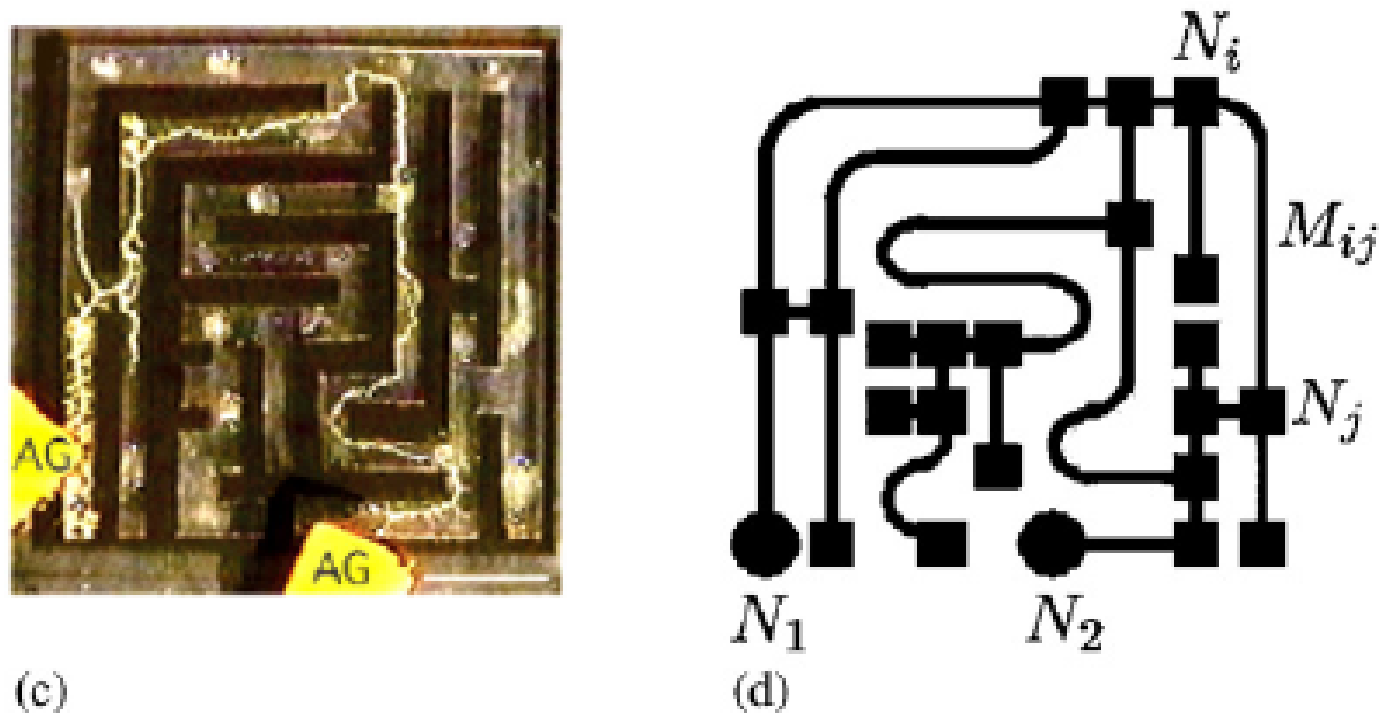}
\end{center}
\caption{\label{fig:maze} The experiment in~\cite{Nakagaki:2000}
(reprinted from there): (a) shows the maze uniformly covered by Physarum;
yellow color indicates presence of Physarum. Food (oatmeal) is provided at the
locations labeled AG. After a while the mold retracts to the shortest path
connecting the food sources as shown in (b) and (c). (d) shows the underlying
abstract graph. The video~\cite{Physarum-Video} shows the
experiment.
}
\end{figure}

\emph{Physarum polycephalum} is a slime mold that apparently is able to solve various optimization problems (see~\cite{PhysarumBook} for a survey of Physarum computations), in particular the shortest path problem.
Nakagaki, Yamada, and T\'{o}th~\cite{Nakagaki:2000} report about the following experiment; see Figure~\ref{fig:maze}. They built a maze, covered it by pieces of Physarum (the slime can be cut into pieces which will reunite if brought into vicinity), and then fed the slime with oatmeal at two locations. After a few hours the slime retracted to a path following the shortest path in the maze connecting the food sources. The authors report that they repeated the experiment with different mazes; in all experiments, Physarum retracted to the shortest path.

The paper~\cite{Tero-Kobayashi-Nakagaki} proposes a mathematical model for the behavior of the slime and argues extensively that the model is adequate. Physarum is modeled as an electrical network with time varying resistors. We have a simple \emph{undirected} graph $G = (N,E)$ with two distinguished nodes modeling the food sources. Each edge $e \in E$ has a positive length $c_e$ and a positive capacity $x_e(t)$; $c_e$ is fixed, but $x_e(t)$ is a function of time. The resistance $r_e(t)$ of $e$ is $r_e(t) = c_e/x_e(t)$. In the electrical network defined by these resistances, a current of value 1 is forced from one of the distinguished nodes to the other. For an (arbitrarily oriented) edge $e = (u,v)$, let $q_e(t)$ be the resulting current over $e$. Then, the capacity of $e$ evolves according to the differential equation  
\begin{equation}
\dot{x}_e(t) = | q_e(t) | - x_e(t),
\end{equation}
where $\dot{x}_e$ is the derivative of $x_e$ with respect to time. In equilibrium ($\dot{x}_e = 0$ for all $e$), the flow through any edge is equal to its capacity. In non-equilibrium, the capacity grows (shrinks) if the absolute value of the flow is larger (smaller) than the capacity.
It is well-known that the electrical flow $q$ is the feasible flow minimizing energy dissipation $\sum_e r_e q_e^2$ (Thomson's principle).

\subsection{Minimum Risk Paths}\label{Minimum Risk Paths}
In~\cite{PhysarumMinimumRiskPath}, Nakagaki~et.~al.~study the following
scenario, see Figure~\ref{Fig:MinimumRiskPath}. They cover a rectangular plate with Physarum and feed it at opposite
corners of the plate. Two thirds of the plate is put under a bright light, one
third is kept in the dark. Under uniform lighting conditions, Physarum would
retract to a straight-line path connecting the food sources~\cite{Nakagaki:2000}. However, 
Physarum does not like light and therefore forms a path with
one kink connecting the food sources.  The path is such that the part under
light is shorter than in a straight-line connection. 
In the theory section of~\cite{PhysarumMinimumRiskPath}, the dynamics 
\begin{equation}\label{new dynamics}
\dot{x}_e(t) = 
\abs{q_e} - a_e x_e(t)
\end{equation}
is proposed. The constant $a_e$ is the \emph{decay rate} of edge $e$ if there is no flow on it. To model the experiment, $a_e = 1$
for edges in the dark part of the plate, and $a_e = C > 1$ for the edges in the
lighted area, where $C$ is a constant. Nakagaki et al.~\cite{PhysarumMinimumRiskPath}
report that in computer simulations, the
dynamics (\ref{new dynamics}) converges to the shortest source-sink path with
respect to 
the modified cost function $a_e c_e$. \ignore{ Observe that $a_e L_e = C L_e > L_e$ for the edges
in the lighted area, i.e., modified length is larger than Euclidean
length, and hence, a traveler of uniform speed with respect to the modified
length function seems to move slower in the lighted area than in the
dark. Hence the kink in the shortest path. }

\begin{figure}[ht!]
\centering{ \includegraphics[width=0.8\textwidth]{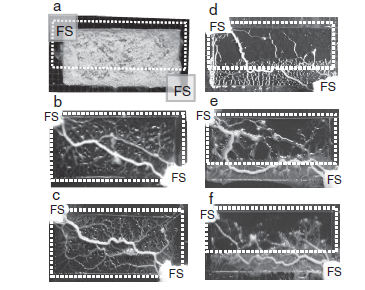}}
\caption{\label{Fig:MinimumRiskPath} Photographs of the connecting paths between two food
sources (FS). (a) The rectangular sheet-like morphology of the
organism immediately before the presentation of two FS and
illumination of the region indicated by the dashed white lines.
(b),(c) Examples of connecting paths in the control experiment
in which the field was uniformly illuminated. A thick tube was
formed in a straight line (with some deviations) between the FS.
(d)-(f) Typical connecting paths in a non-uniformly illuminated
field (95 K lx). Path length was reduced in the illuminated field,
although the total path length increased. Note that fluctuations in
the path are exhibited from experiment to experiment. (Figure and caption
reprinted from~\cite[Figure 2]{PhysarumMinimumRiskPath}.)}
\end{figure}

\subsection{A Reformulation: Nonuniform Physarum}

Let $y_e = a_e x_e$. The electrical flow $q$ is determined by the resistances $r_e =
c_e/x_e$. Therefore, we write  $q(r(t))$ instead of $q(t)$ for clarity. Next
observe that $r_e = c_e/x_e = (a_e c_e)/(y_e)$. Thus if we take $y$ as the
vector of edge capacities and $(a_ec_e)_e$ as the vector of costs, we get the
same electrical flow. We can express~\lref{new dynamics} as a dynamics for $y$ as
\[ \dot{y}_e = a_e \dot{x}_e = a_e  (\abs{q_e(r_e)} - a_e x_e) = a_e (\abs{q_e(r_e)} -
y_e).\]
So we may instead consider the dynamics 
\[  \dot{y_e} = a_e (\abs{q_e} - y_e) \]
under the modified cost function $a_e c_e$. This is our dynamics \lref{NUPD}, where we generalized further by allowing $a_e$ to depend on $x$ and $t$. 
In this model, the \emph{quantity $a_e$ indicates the responsiveness
  (reactivity) of an edge to
differences between flow and capacity}.  

\subsection{Beyond Shortest Paths}

The biological experiments concern shortest paths. The papers~\cite{Physarum,Bonifaci-Physarum} showed Theorem~\ref{uniform convergence} for the shortest path problem and the transportation problem; here $A$ is the node-arc incidence matrix of a directed graph, $b$ is the supply-demand vector of a transportation problem, i.e., $\sum_i b_i = 0$, and $c > 0$ are the edge costs. Convergence for the discretization of \lref{PD} was shown in~\cite{Physarum-Complexity-Bounds}.

The theoretical literature soon asked whether the dynamics~\lref{PD} can also solve more general problems. The basis pursuit problem was first studied in~\cite{SV-IRLS} and convergence of the discretization was shown. Theorem~\ref{uniform convergence} was shown in~\cite{BeckerBonifaciKarrenbauerKolevMehlhorn}. The function~\lref{Lyapunov Function A} was introduced in~\cite{Facca-Daneri-Cardin-Putti} and shown to be a Lyapunov function for \lref{PD} in~\cite{Facca-Cardin-Putti}. 

The paper~\cite{Bonifaci-RefinedModel} introduces and studies the dynamics~\lref{RefinedDynamics}.

\newcommand{\hQ}{\bar{q}}

The directed version of the Physarum dynamics evolves according to the differential equation 
\begin{equation}\label{directed dynamics}
  \dot{x}_e = q_e - x_e.
\end{equation}
No biological significance is claimed for this dynamics. It can solve linear programs with positive cost vectors~\cite{Ito-Convergence-Physarum,SV-LP}. In~\cite{Physarum-Complexity-Bounds}, convergence was claimed for the non-uniform dynamics $\dot{x}_e = a_e (q_e - x_e)$. The proof is incorrect~\cite{Physarum-Complexity-Bounds-Erratum}. 


  \section{The Proof of Theorem~\ref{main theorem}}\label{Proofs}

  \subsection{Preliminaries}

  For a capacity vector $x \ge 0$ and a vector $f \in \R^m$ with $\supp(f) \subseteq \supp(x)$, we use
  \[ E_x(f) = \sum_e (c_e/x_e) f_e^2\]  to denote the \emph{energy} of $f$. 
  When $\supp(f) \not\subseteq \supp(x)$, the energy of $f$ is infinite. Further, we use
  \[ \cost(f)= \sum_e c_e \abs{f_e} = c^T \abs{f}\]  to denote the \emph{cost} of $f$. Note that
  \[ E_x(x) = \sum_e (c_e/x_e) x_e^2 = \sum_e c_e x_e = \cost(x). \]

We use $R$ to denote a diagonal matrix with entries $c_e/x_e$; here we use the convention that attention is restricted to the edges $e$ with $x_e > 0$. In part (a) of Theorem~\ref{main theorem}, it is shown that $x(t) > 0$ for all $t$ if $x(0) > 0$. However, in the limit some edges may have capacity zero. Energy-minimizing solutions are induced by node potentials $p \in \R^n$ according to the following equations:
\begin{align}
  b &= Aq \label{feasibility}\\
  q &= R^{-1} A^T p  \label{definition of q}\\
  AR^{-1} A^T p &=b      \label{definition of p}
\end{align}
We give a short justification. The vector $q$ minimizes the quadratic function $\sum_e (c_e/x_e) q_e^2$ subject to the constraints $Aq = b$. The KKT conditions (see~\cite[Subsection 5.5]{Boyd-Vandenberghe}) state that at the optimum, the gradient of the objective is a linear combination of the gradients of the constraints. Thus
\[       2 (c_e/x_e) q_e = \sum_i p_i A_{i,e}  \]
for some vector $p \in \R^n$. Absorbing the factor $2$ into $p$ yields equation \lref{definition of q}. Substitution of \lref{definition of q} into \lref{feasibility} gives \lref{definition of p}.

We next collect some well-known properties of the minimum energy solution; the proof of part (ii) can, for example, be found in~\cite{BeckerBonifaciKarrenbauerKolevMehlhorn}.
Let $D$ be the maximum absolute value of a square submatrix of $A$.
\begin{compactenum}[\mbox{}\hspace{\parindent}(i)]
\item The minimum energy solution is defined by~\lref{definition of q} and~\lref{definition of p}.
Moreover, it is unique.
\item $\abs{q_e} \le D\onenorm{b}$ for every $e \in [m]$. 
\item $E_x(q) = \sum_e (c_e/x_e) q_e^2 = b^T p$, where $p$ is defined by~\lref{definition of p}. This holds since \[E_x(q) = q^T R q = p^T A R^{-1} R R^{-1} A^T p = p^T A R^{-1} A^T p = p^T b.\]
\end{compactenum}
With the help of~\lref{definition of q}, the dynamics can we rewritten as
\begin{equation}\label{rewritten dynamics}  \dot{x}_e = a_e(x,t) \left(\left|\frac{x_e}{c_e} A_e^T p\right|- x_e\right) = a_e(x,t) x_e  \left( \frac{\abs{A_e^T p}}{c_e} - 1\right),\end{equation}
where $A_e$ denotes the $e$-th column of matrix $A$.

\subsection{Existence}\label{Existence}

The right-hand side of~\lref{NUPD} is locally Lipschitz-continuous in $x$ and $t$. The function $a_e(x,t)$ is locally Lipschitz by assumption, $q$ is an infinitely often differentiable rational function in the $x_e$ and hence locally Lipschitz. Furthermore, locally Lipschitz-continuous functions are closed under additions and multiplications.
Thus $x(t)$ is defined and unique for $t \in [0,t_0)$ for some $t_0$.

Since $a_e(x,t) \le C$ for all $e$, $x$ and $t$, we have $\dot{x}_e \ge -Cx$ and thus $x_e \ge x_e(0) e^{-Ct}$. Hence, $x(t) > 0$ for all $t$ and the solution does not reach the boundary of the domain in finite time.
Also since $\abs{q_e(t)} \le D\onenorm{b}$ for all $e$ and $t$, we have $\dot{x}_e \le C(D \onenorm{b} - x)$ and hence $x_e(t) \le \max(x_e(0), D \onenorm{b})$ for all $t$. In particular, the solution is bounded.  Thus, $t_0 = \infty$ by well-known results of maximal solutions of ordinary differential equations~\cite[Corollary 3.2]{Hartman}. 

The condition $a(x,t) \le C < \infty$ is crucial for existence. Let $n = 0$, $m = 1$ and $a(x,t) = 1/x$. The matrix $A$ is $0 \times 1$, i.e., there are no constraints. Then the minimum energy solution is the null-vector of dimension one and~\lref{NUPD} becomes $\dot{x} = 1/x \cdot (0 - x) = -1$; the domain of definition is $[0,x(0))$. 

\subsection{Fixed Points}

A point $x$ is a fixed point if $\dot{x} = 0$. In ~\cite{BeckerBonifaciKarrenbauerKolevMehlhorn} is was shown that the fixed points of~\lref{PD} are the vectors $\abs{f}$, where $f$ is a basic feasible solution of~\lref{LP}. This uses the assumption that any two basic feasible solutions have distinct cost. The proof carries over to~\lref{NUPD} under the additional assumption that $a_e(x,t) \ge \epsilon$ for all $e$, $x$ and $t$ and some positive $\epsilon$. Under this additional assumption $\dot{x} = 0$ is equivalent to $\abs{q} = x$ for~\lref{PD} and~\lref{NUPD}. 
This section is reprinted from~\cite{BeckerBonifaciKarrenbauerKolevMehlhorn} with minor adaptions.  A vector $f'$ is \emph{sign-compatible} with a vector $f$ (of the same dimension) if $f'_e \not= 0$ implies $f'_e f_e > 0$. In particular, $\supp(f') \subseteq \supp(f)$. We use the following corollary of the finite basis theorem for polyhedra.

\begin{lemma}\label{sign-compatible representation}
	Let $f$ be a feasible solution of \lref{LP}. Then $f$ is the sum of a convex combination of at most $n$ basic feasible solutions plus a vector in the kernel of $A$. Moreover, all elements in this representation are sign-compatible with $f$.
\end{lemma}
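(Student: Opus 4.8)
The plan is to derive the statement from the Minkowski--Weyl decomposition theorem (the ``finite basis theorem'') applied to a single, carefully chosen polyhedron that builds sign-compatibility in from the start. Writing $S = \supp(f)$ and letting $\sigma_e \in \{+1,-1\}$ be the sign of $f_e$ for $e \in S$, I would introduce
\[ P = \set{g \in \R^m}{Ag = b,\ g_e = 0 \text{ for } e \notin S,\ \sigma_e g_e \ge 0 \text{ for } e \in S}. \]
Every $g \in P$ is by construction sign-compatible with $f$, and $f \in P$, so $P$ is a nonempty polyhedron. This disposes of the sign-compatibility clause at once, since it will automatically hold for every object produced below.

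By the finite basis theorem, $P = \mathrm{conv}(V) + \mathrm{cone}(R)$, where $V$ is the finite set of vertices of $P$ and $R$ is a finite set of generators of its recession cone. The recession cone is $\set{r}{Ar = 0,\ r_e = 0 \text{ for } e \notin S,\ \sigma_e r_e \ge 0}$, so every $r \in R$ lies in $\ker A$; hence any conical combination $h = \sum_j \mu_j r_j$ is a sign-compatible vector in the kernel of $A$. The crux is to show that each vertex $v \in V$ is a basic feasible solution of $Af = b$. At a vertex, $m$ linearly independent constraints are tight: the $n$ rows of $A$, together with the active coordinate constraints $g_e = 0$ (for $e \notin S$, and for those $e \in S$ with $v_e = 0$). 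Since the gradients of the active coordinate constraints are the standard unit vectors of the zero coordinates, spanning $\R^m$ forces the columns $A_e$, $e \in \supp(v)$, to be linearly independent, so $\abs{\supp(v)} \le n$. Extending $\supp(v)$ to a set $B$ of size $n$ with $A_B$ invertible (possible since $A$ has full row rank) and using $v_{\overline{B}} = 0$ gives $A_B v_B = b$, i.e. $v_B = A_B^{-1} b$; thus $v$ is a basic feasible solution.

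Combining the two ingredients, our $f$ is written as $\sum_k \lambda_k v_k + h$ with $\lambda_k \ge 0$, $\sum_k \lambda_k = 1$, each $v_k$ a sign-compatible basic feasible solution, and $h \in \ker A$ sign-compatible. To control the number of basic feasible solutions actually appearing, I would pass to a minimal convex representation of $f - h$ inside the affine solution space $\set{g}{Ag = b}$ and invoke Carath\'{e}odory's theorem: in a minimal representation the differences $v_k - v_1$ are linearly independent elements of $\ker A$, which caps the count at the bound asserted in the statement.

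The main obstacle I anticipate is precisely this vertex-to-basic-feasible-solution identification --- the linear-algebra bookkeeping of tight constraints and the rank argument that pins $\abs{\supp(v)}\le n$ and produces the invertible submatrix $A_B$ --- together with the Carath\'{e}odory counting step. By contrast, the sign-compatibility of every term, which is usually the delicate part of such conformal-decomposition results, comes for free once $P$ is set up as above, since it is enforced by the defining inequalities of $P$ and inherited by all vertices, rays, and their combinations.
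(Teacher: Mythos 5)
Your construction is, at its core, the same as the paper's: the paper flips the signs of the appropriate columns of $A$ so that $f \ge 0$ and then invokes the finite basis theorem for the standard-form polyhedron $\set{g}{Ag=b,\ g\ge 0}$, with sign-compatibility coming for free from nonnegativity; your polyhedron $P$ is exactly this object written without the sign flip, plus the harmless extra restriction to $\supp(f)$. What you do differently is to unpack the cited theorem: the identification of the vertices of $P$ with basic feasible solutions via the rank of the active constraints, the observation that the recession cone sits inside $\ker A$, and the Carath\'eodory counting step. That unpacking is correct (note only that $P$ is pointed, since it lies in a coordinate orthant, so the decomposition $P=\mathrm{conv}(V)+\mathrm{cone}(R)$ with $V$ the vertex set is legitimate), and it makes the lemma self-contained where the paper simply cites \cite[Corollary 7.1b]{Schrijver:Book}.

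The one place where your argument does not deliver what you claim is the cardinality bound. Carath\'eodory inside the affine space $\set{g}{Ag=b}$ --- equivalently, your observation that in a minimal representation the differences $v_k-v_1$ are linearly independent elements of $\ker A$ --- bounds the number of vertices by $\dim\ker A+1=m-n+1$, not by $n$; these are different numbers, and the bound $n$ in the statement cannot be attained in general. For instance, with $A=(1,\dots,1)\in\Z^{1\times m}$, $b=1$, and $f=(1/m,\dots,1/m)$, the only sign-compatible kernel vector is $0$ and all $m$ unit vectors must appear in the convex combination, while $n=1$. So your proof establishes the lemma with the bound $m-n+1$ in place of $n$; since the paper only ever uses the existence of some finite sign-compatible decomposition (in Lemma~\ref{fixed points}), nothing downstream is affected, but you should not assert that Carath\'eodory yields the count ``asserted in the statement.''
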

\begin{proof} We may assume $f \ge 0$. Otherwise, we flip the sign of the appropriate columns of $A$. Thus, the system $Af =b,\ f \ge 0$ is feasible and $f$ is the sum of a convex combination of at most $n$ basic feasible solutions plus a vector in the kernel of $A$ by the finite basis theorem~\cite[Corollary 7.1b]{Schrijver:Book}. By definition, the elements in this representation are non-negative vectors and hence sign-compatible with $f$.
\end{proof}


\begin{lemma}\label{fixed points} Assume $a_e(x,t) \ge \epsilon$ for some positive $\epsilon$ and all $e$, $x$, and $t$, and that no two feasible solutions of $A f = b$ have the same cost.
	If $f$ is a basic feasible solution of~\eqref{LP}, then $x = \abs{f}$ is a fixed point. Conversely, if $x$ is a fixed point, then $x = \abs{f}$ for some basic feasible solution $f$.
\end{lemma}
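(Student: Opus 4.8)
The plan is to reduce both directions to the characterization already recorded before the statement: under the hypothesis $a_e(x,t) \ge \epsilon > 0$, a point $\bar x \ge 0$ is a fixed point of \lref{NUPD} if and only if $\abs{q} = \bar x$, where $q$ is the minimum energy solution associated with $\bar x$. This is immediate from $\dot{x}_e = a_e(x,t)(\abs{q_e} - x_e)$ together with $a_e \ge \epsilon$. With it in hand the forward direction is short. If $f$ is a basic feasible solution and $\bar x = \abs{f}$, then $\supp(\bar x) = \supp(f)$ and the columns $\set{A_e}{e \in \supp(f)}$ are linearly independent, being a subset of a basis. Hence $A_{\supp(f)}$ has full column rank, so $f$ is the \emph{only} flow with $Af = b$ and $\supp(f) \subseteq \supp(\bar x)$; therefore the energy-minimizing $q$ equals $f$, giving $\abs{q} = \abs{f} = \bar x$. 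This direction uses neither the distinct-cost assumption nor $a_e \ge \epsilon$ beyond the equivalence.

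For the converse, suppose $\bar x$ is a fixed point, so $\abs{q} = \bar x$ with $q$ energy-minimizing. First I extract a node potential $p$ from \lref{definition of q}: since $q_e = (x_e/c_e) A_e^T p$ and $\abs{q_e} = \bar{x}_e$ on $\supp(q) = \supp(\bar x)$, we obtain $\abs{A_e^T p} = c_e$ and $\text{sign}(A_e^T p) = \text{sign}(q_e)$ for every $e \in \supp(q)$. The key step is then a cost lower bound: for every $f$ with $Af = b$ and $\supp(f) \subseteq \supp(\bar x)$,
\[ \cost(f) = \sum_{e \in \supp(f)} c_e \abs{f_e} = \sum_{e \in \supp(f)} \abs{A_e^T p}\,\abs{f_e} \ge \sum_e (A_e^T p) f_e = p^T A f = p^T b, \]
with equality precisely when $f$ is sign-compatible with $q$. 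Applying this to $q$ itself yields $\cost(q) = p^T b$, so $q$ minimizes cost over all such $f$ and every sign-compatible $f$ attains the same value $p^T b$.

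Finally I invoke the sign-compatible representation, Lemma~\ref{sign-compatible representation}, to write $q = \sum_k \lambda_k f^{(k)} + z$, where the $f^{(k)}$ are basic feasible solutions, $\lambda_k \ge 0$ with $\sum_k \lambda_k = 1$, $z \in \ker A$, and all terms are sign-compatible with $q$, hence supported inside $\supp(\bar x)$. Each $f^{(k)}$ satisfies $A f^{(k)} = b$ and is sign-compatible with $q$, so the cost bound gives $\cost(f^{(k)}) = p^T b = \cost(q)$ for every $k$. Since distinct basic feasible solutions have distinct costs by hypothesis, all the $f^{(k)}$ coincide with a single basic feasible solution $\hat f$, whence $\sum_k \lambda_k f^{(k)} = \hat f$ and $q = \hat f + z$. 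Sign-compatibility makes the one-norm additive, $\cost(q) = \cost(\hat f) + \cost(z)$, and combined with $\cost(\hat f) = \cost(q)$ this forces $\cost(z) = \sum_e c_e \abs{z_e} = 0$; as $c > 0$, we conclude $z = 0$ and $q = \hat f$. Thus $\bar x = \abs{q} = \abs{\hat f}$ for a basic feasible solution.

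I expect the main obstacle to be the cost-duality step: correctly extracting $p$ and proving $\cost(f) \ge p^T b$ with the sharp equality condition (sign-compatibility). This is the part that converts energy-minimality at a fixed point into cost-optimality within the support and lets the distinct-cost hypothesis collapse the representation to a single vertex; the rest is bookkeeping.
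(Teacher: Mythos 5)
Your proof is correct and follows essentially the same route as the paper: reduce fixed points to $\abs{q}=x$ using $a_e\ge\epsilon$, extract the potential $p$ with $\abs{A_e^Tp}=c_e$ on the support, decompose $q$ via Lemma~\ref{sign-compatible representation}, and conclude from the identity $\cost(f^{(k)})=b^Tp$ and the distinct-cost hypothesis that all contributing basic feasible solutions coincide. The only divergence is how the kernel component is eliminated: the paper argues that a nonzero sign-compatible kernel part would strictly decrease the energy of $q$, contradicting energy-minimality, whereas you derive $\cost(z)=\cost(q)-\cost(\hat f)=0$ from additivity of the weighted one-norm under sign-compatibility; both arguments are valid.
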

\begin{proof}
	Let $f$ be a basic feasible solution, let $x = \abs{f}$, and let $q$ be the minimum energy feasible solution with respect to the resistances $c_e/x_e$. We have $Aq = b$ and $\supp(q) \subseteq \supp(x)$ by definition of $q$. Since $f$ is a basic feasible solution there is a subset $B$ of size $n$ of the columns of $A$ such that $A_B$ is non-singular and $f = (A_B^{-1} b, 0)$. Since $\supp(q) \subseteq \supp(x) =\supp(f)  \subseteq B$, we have $q = (q_B,0)$ for some vector $q_B$. Thus, $b = Aq = A_B q_B$ and hence $q_B = f_B$. Therefore $\dot{x} = \abs{q} - x = 0$ and $x$ is an fixed point.

	Conversely, if $x$ is an fixed point, $\abs{q_e} = x_e$ for every $e$. By changing the signs of some columns of $A$, we may assume $q \ge 0$. Then $q = x$. Since $q_e = (x_e/c_e) A_e^T p$ by~\lref{definition of q}, we have $c_e = A_e^T p$, whenever $x_e > 0$. By Lemma~\ref{sign-compatible representation}, $q$ is a convex combination of basic feasible solutions plus a vector in the kernel of $A$ that are sign-compatible with $q$. The vector in the kernel is zero since $q$ is a minimum energy solution\footnote{Assume $q = q^1 + q^2$ with $q^1 \ge 0$, $q_2 \ge 0$, $q_2 \not= 0$, and $A q_2 = 0$. Then $A q_1 = b$, $\supp(q_1) \subseteq \supp(q) \subseteq \supp(x)$, and $E_x(q_1) < E_x(q)$, a contradiction.}. For any basic feasible solution $z$ contributing to $q$, we have $\supp(z) \subseteq \supp(x)$.  Summing over the $e \in \supp(z)$, we obtain 
	\[ \cost(z) = \sum_{e \in \supp(z)} c_e z_e = \sum_{e \in \supp(z)} z_e A_e^T p = b^T p,\]
	i.e., all basic feasible solutions used to represent $q$ have the same cost. Since we assume the costs of distinct basic feasible solutions to be distinct, $q$ is a basic feasible solution. 
    \end{proof}

    \begin{corollary}\label{discrete set}  Assume $a_e(x,t) \ge \epsilon$ for some positive $\epsilon$ and all $e$, $x$, and $t$ and that no two feasible solutions of $A f = b$ have the same cost. Then the  set of fixed points is a discrete set. \end{corollary}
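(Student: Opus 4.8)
The plan is to deduce this immediately from Lemma~\ref{fixed points} by observing that the set of fixed points is not merely discrete but in fact \emph{finite}. Under the stated hypotheses (reactivity bounded below by $\epsilon$ and distinct costs for distinct basic feasible solutions), Lemma~\ref{fixed points} identifies the set of fixed points of~\lref{NUPD} exactly as
\[ \set{\abs{f}}{f \text{ is a basic feasible solution of } Af = b}. \]
So it suffices to show that this set is finite.

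Next I would count basic feasible solutions via their defining bases. By definition, every basic feasible solution $f$ is determined by a subset $B \subseteq [m]$ with $\abs{B} = n$ such that the submatrix $A_B$ is invertible: one has $f_B = A_B^{-1} b$ and $f_{\overline{B}} = 0$. Thus the choice of $B$ determines $f$ uniquely, and there are at most $\binom{m}{n}$ admissible index sets $B$. Consequently there are at most $\binom{m}{n}$ basic feasible solutions, and hence at most $\binom{m}{n}$ vectors of the form $\abs{f}$. The set of fixed points is therefore finite.

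Finally I would invoke the elementary fact that any finite subset of $\R^m$ is discrete: every point is isolated, since one can separate it from the finitely many others by a sufficiently small ball. This yields the claim.

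I do not expect any genuine obstacle here; the entire mathematical content lives in Lemma~\ref{fixed points}, and the corollary is the routine remark that finitely many column bases can produce only finitely many—hence isolated—fixed points. The only point requiring a word of care is that the bijection is between bases and \emph{basic feasible solutions}, whereas fixed points correspond to the vectors $\abs{f}$; since taking absolute values can only merge points, the bound $\binom{m}{n}$ on the number of bases still bounds the number of fixed points, so finiteness (and therefore discreteness) is preserved.
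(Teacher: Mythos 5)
Your proposal is correct and matches the argument the paper intends: the corollary is stated as an immediate consequence of Lemma~\ref{fixed points}, since the fixed points are exactly the vectors $\abs{f}$ for basic feasible solutions $f$, of which there are at most $\binom{m}{n}$, and a finite set is discrete. The paper gives no explicit proof, and your filling-in (including the remark that passing from $f$ to $\abs{f}$ can only merge points) is exactly the routine counting argument required.
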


\subsection{The Lyapunov Function}

\begin{lemma}\label{Lyapunov Lemma}  $L(x,t) = p^T b + c^T x$ is a Lyapunov function for~\lref{rewritten dynamics}. More precisely, $\frac{d}{dt} L(x,t) \le 0$ always with equality only if  for all $e$ either $x_e = 0$ or $a_e(x,t) = 0$ or $\abs{A_e^Tp} = c_e$.  \end{lemma}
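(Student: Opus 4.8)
The plan is to write $L(x,t) = E_x(q) + \cost(x)$, where $E_x(q) = \sum_e (c_e/x_e) q_e^2 = p^T b$ by property (iii), and to differentiate along a trajectory of \lref{rewritten dynamics}, showing that each edge contributes a nonpositive amount to $\frac{d}{dt} L$. The term $\cost(x) = \sum_e c_e x_e$ is harmless, since $\frac{d}{dt}\cost(x) = \sum_e c_e \dot x_e$. The real work is differentiating the energy $E_x(q)$, because $q$ --- and hence the potential $p$ --- depends on $x$ in a complicated, only implicitly defined way through \lref{definition of p}.

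The key idea is to avoid differentiating $p$ at all. Instead I would compute the total derivative of $E_x(q) = \sum_e (c_e/x_e) q_e^2$, treating both $x_e$ and $q_e$ as functions of $t$:
\[ \frac{d}{dt} E_x(q) = -\sum_e \frac{c_e}{x_e^2} q_e^2\, \dot x_e + 2\sum_e \frac{c_e}{x_e} q_e\, \dot q_e. \]
The decisive observation --- the crux of the whole argument --- is that the second sum vanishes. By the optimality relation $(c_e/x_e) q_e = A_e^T p$ read off from \lref{definition of q}, that sum equals $2\,p^T A \dot q$; and since the feasibility constraint $Aq = b$ of \lref{feasibility} holds identically along the trajectory, differentiation gives $A \dot q = 0$, whence $2\,p^T A \dot q = 0$. (This is precisely an instance of the envelope theorem: $q$ minimizes $E_x(\cdot)$ over the affine space $\{f : Af = b\}$, so first-order variations of $q$ do not change the optimal value.) Thus $\frac{d}{dt} E_x(q) = -\sum_e (c_e/x_e^2) q_e^2\, \dot x_e$, with no reference to $\dot p$ whatsoever.

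Combining the two pieces gives
\[ \frac{d}{dt} L = \sum_e c_e\Bigl(1 - \frac{q_e^2}{x_e^2}\Bigr)\dot x_e = \sum_e \frac{c_e}{x_e^2}\bigl(x_e^2 - q_e^2\bigr)\dot x_e. \]
Now I would substitute $\dot x_e = a_e(x,t)(\abs{q_e} - x_e)$ from \lref{NUPD} and factor: since $x_e^2 - q_e^2 = (x_e - \abs{q_e})(x_e + \abs{q_e})$ while $\abs{q_e} - x_e = -(x_e - \abs{q_e})$, each summand becomes
\[ -\frac{c_e}{x_e^2}\, a_e(x,t)\,(x_e - \abs{q_e})^2\,(x_e + \abs{q_e}). \]
Along the trajectory $x_e > 0$, and since $c_e > 0$, $a_e(x,t) \ge 0$, $x_e + \abs{q_e} \ge 0$, and the squared factor is nonnegative, every summand is $\le 0$; hence $\frac{d}{dt} L \le 0$.

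Finally, for the equality characterization I would simply read off when each summand vanishes. Because $c_e > 0$ and $x_e > 0$, the $e$-th term is zero exactly when $a_e(x,t) = 0$ or $(x_e - \abs{q_e})^2(x_e + \abs{q_e}) = 0$, i.e.\ $\abs{q_e} = x_e$; and since $\abs{q_e} = (x_e/c_e)\abs{A_e^T p}$, the latter is equivalent to $\abs{A_e^T p} = c_e$. Thus $\frac{d}{dt} L = 0$ iff for every $e$ one has $a_e(x,t) = 0$ or $\abs{A_e^T p} = c_e$ (the case $x_e = 0$ in the statement being vacuous on the trajectory), as claimed. I expect the only genuinely delicate point to be the vanishing of the $\dot q$ contribution; once that is in hand, the rest is routine sign-chasing.
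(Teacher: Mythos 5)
Your proof is correct and follows essentially the same route as the paper: both decompose $L$ into the energy term plus $c^Tx$ and rest on the observation that the implicit dependence of the minimum energy on $x$ through the minimizer contributes nothing to the time derivative, leaving $\frac{d}{dt}(p^Tb) = -\sum_e (c_e/x_e^2)q_e^2\,\dot x_e$ and the same factorization $-(c_e/x_e^2)\,a_e\,(x_e-\abs{q_e})^2(x_e+\abs{q_e})$ for each summand. The only difference is presentational: you cancel the implicit terms on the primal side via $A\dot q = 0$ (the envelope argument), whereas the paper differentiates the dual form $p^TAR^{-1}A^Tp$ and cancels the $\dot p$ terms using the derivative of $AR^{-1}A^Tp=b$.
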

\begin{proof}
  Taking the derivative of \lref{definition of p} with respect to time yields
  \begin{equation}\label{helper} A \frac{d}{dt}(R^{-1}) A^T p + A R^{-1} A^T \dot{p} = 0 \end{equation}
  We next compute the derivative of both summands of $L(x,t)$ with respect to time separately. For the first summand we obtain
  \begin{align}
    \frac{d}{dt} p^T b &= \frac{d}{dt} p^T A R^{-1} A^T p = \dot{p}^T A R^{-1} A^T p +  p^T A \frac{d}{dt} (R^{-1}) A^T p + p^T A R^{-1} A^T \dot{p}\nonumber\\
                       &= - p^T A \frac{d}{dt} (R^{-1}) A^T p = - \sum_e \frac{(A_e^T p)^2}{c_e}\dot{x}_e \label{eq:dt_btp}\\
    &=  - \sum_e (A_e^T p)^2 \frac{a_e}{c_e} \left( \frac{x_e}{c_e} \abs{A_e^T p} - x_e \right)\nonumber\\
  &= - \sum_e a_e c_e x_e \left(\frac{\abs{A_e^T p}^3}{c_e^3} - \frac{\abs{A^T_e p}^2}{c_e^2}\right),\label{eq:dt_btpLem}
  \end{align}
where the first equality uses~\lref{definition of p}, the second equality follows from the product rule of differentiation, the third equality follows from~\lref{helper}, the fourth equality is a simple algebraic manipulation, the fifth equality follows from~\lref{rewritten dynamics}, and the last equality is a simple algebraic manipulation.

For the second summand, we obtain
\begin{align}
  \frac{d}{dt} c^T x = \sum_e c_e \dot{x}_e = \sum_e a_e c_e \left( \frac{x_e}{c_e} \abs{A_e^T p} - x_e \right)
  = \sum_e a_e c_e x_e \left( \frac{\abs{A_e^T p}}{c_e} - 1 \right).\label{eq:dt_ctx}
\end{align}
Combining \lref{eq:dt_btpLem} and \lref{eq:dt_ctx}, and writing $\lambda_e$ instead of 
$\abs{A_e^T p}/c_e$, yields
\[ \frac{d}{dt} L(x,t) = - \sum_e a_e c_e x_e \left(\lambda_e^3 - \lambda_e^2 - \lambda_e + 1\right). \]
Since
\[ \lambda_e^3 - \lambda_e^2 - \lambda_e + 1 = (\lambda_e^2 - 1)(\lambda_e - 1) = (\lambda_e + 1)(\lambda_e - 1)^2\]
and $\lambda_e \ge 0$, $\frac{d}{dt} L(x,t) \le 0$ always. Moreover, the derivative is equal to zero only if $a_e x_e (\lambda_e -1) = 0$ for all $e$, i.e., for all $e$ either $x_e = 0$ or $a_e(x,t) = 0$ or $\abs{A_e^Tp} = c_e$. 
\end{proof}

\begin{corollary}\label{fixed point = dot{L} = 0} Assume further $a_e(x,t) \ge \epsilon$ for some positive $\epsilon$ and all $e$, $x$ and $t$. Then $L(x,t) = 0$ if and only if $x$ is a fixed point. \end{corollary}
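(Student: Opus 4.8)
The plan is to obtain the equivalence by directly matching two per-edge conditions that are already in hand: the characterization of $\frac{d}{dt} L(x,t) = 0$ supplied by Lemma~\ref{Lyapunov Lemma}, and the vanishing of the right-hand side of the rewritten dynamics~\lref{rewritten dynamics}. Writing $\lambda_e = \abs{A_e^T p}/c_e$ as in the proof of Lemma~\ref{Lyapunov Lemma}, that lemma says $\frac{d}{dt} L(x,t) = 0$ holds exactly when, for every $e$, one of $x_e = 0$, $a_e(x,t) = 0$, or $\lambda_e = 1$ is satisfied. The additional hypothesis $a_e(x,t) \ge \epsilon > 0$ rules out the middle alternative, so this reduces to the statement that for every $e$ either $x_e = 0$ or $\lambda_e = 1$.

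Next I would record the corresponding characterization of fixed points. By~\lref{rewritten dynamics} we have $\dot{x}_e = a_e(x,t)\, x_e\, (\lambda_e - 1)$, and since $a_e(x,t) \ge \epsilon > 0$ the factor $a_e(x,t)$ never vanishes; hence $\dot{x}_e = 0$ if and only if $x_e = 0$ or $\lambda_e = 1$. Therefore $x$ is a fixed point, i.e.\ $\dot{x}_e = 0$ for all $e$, precisely when for every $e$ either $x_e = 0$ or $\lambda_e = 1$. This is word-for-word the condition obtained above for $\frac{d}{dt} L(x,t) = 0$, so the two are equivalent and the corollary follows.

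There is essentially no obstacle here: both conditions share the common factor $x_e(\lambda_e - 1)$, and the sole role of the hypothesis $a_e(x,t) \ge \epsilon$ is to strip away the degenerate branch $a_e(x,t) = 0$ that appears in Lemma~\ref{Lyapunov Lemma} but is irrelevant once the reactivities are bounded away from zero. If one prefers an intrinsic phrasing, the fixed-point condition can be restated as $\abs{q} = x$ via $q_e = (x_e/c_e) A_e^T p$ from~\lref{definition of q}: when $x_e > 0$ the identity $\abs{q_e} = x_e$ is equivalent to $\lambda_e = 1$, while when $x_e = 0$ both $q_e$ and $x_e$ vanish automatically because $\supp(q) \subseteq \supp(x)$.
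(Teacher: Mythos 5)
Your proposal is correct and follows essentially the same route as the paper: both invoke the characterization of $\frac{d}{dt}L(x,t)=0$ from Lemma~\ref{Lyapunov Lemma}, use the hypothesis $a_e(x,t)\ge\epsilon$ to discard the branch $a_e(x,t)=0$, and identify the remaining condition (for every $e$, $x_e=0$ or $\abs{A_e^Tp}=c_e$) with the vanishing of $\dot{x}_e = a_e(x,t)\,x_e\,(\lambda_e-1)$, equivalently $\abs{q}=x$. The only cosmetic difference is that the paper phrases the fixed-point side directly as $\abs{q_e}=\tfrac{x_e}{c_e}\abs{A_e^Tp}=x_e$, which you also note at the end.
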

\begin{proof} We have $L(x,t) = 0$ if and only if for all $e$ either $x_e = 0$ or $\abs{A_e^Tp} = c_e$. The latter condition is equivalent to $\abs{q_e} = x_e/c_e \abs{A_e^Tp} = x_e$. Thus $\abs{q} = x$. \end{proof}

\subsection{Convergence}

From now on, we make the additional assumption that $a_e(x,t) \ge \epsilon$ for some positive $\epsilon$ and all $e$, $x$, and $t$. It then follows from the general theory of dynamical systems that $x(t)$ converges to a fixed point. 

\begin{corollary}[Generalization of Corollary 3.3.~in~\cite{Bonifaci-Physarum}.]\label{convergence to fixed point}
 Assume further $a_e(x,t) \ge \epsilon$ for all $e$, $x$ and $t$. As $t \rightarrow \infty$, $x(t)$ and $\abs{q(t)}$ approach a fixed point $x_0$. Moreover, $E_x(q)$ and $\cost(x)$ converge to $c^T x_0$. 
\end{corollary}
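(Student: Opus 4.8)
\section*{Proof proposal for Corollary~\ref{convergence to fixed point}}

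The plan is to run a LaSalle-type argument with $L$ as Lyapunov function, the only genuine complication being that \lref{NUPD} is \emph{non-autonomous} through $a_e(x,t)$. The structural observation that makes everything go through is that, although the dynamics depends on $t$, the candidate $L(x,t) = p^Tb + c^Tx = E_x(q) + \cost(x)$ depends only on $x$ (since $q$, and hence $p$, is determined by $x$), and that Lemma~\ref{Lyapunov Lemma} together with the uniform bound $a_e(x,t) \ge \epsilon$ yields a \emph{time-independent} dissipation estimate
\[ \frac{d}{dt} L(x(t)) \le -\epsilon\, W(x(t)), \qquad W(x) := \sum_e c_e x_e (\lambda_e+1)(\lambda_e-1)^2 \ge 0, \]
where $\lambda_e = \abs{A_e^Tp}/c_e = \abs{q_e}/x_e$. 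The function $W$ is nonnegative, continuous on the positive orthant, and, because $a_e \ge \epsilon > 0$, Corollary~\ref{fixed point = dot{L} = 0} says it vanishes exactly at the fixed points.

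First I would record the easy facts: $L(x(t))$ is non-increasing (Lemma~\ref{Lyapunov Lemma}) and bounded below by $0$, hence converges to some $L_\infty \ge 0$; and, by the bounds in Section~\ref{Existence} together with $a_e \le C$, the trajectory stays in a compact set with $\dot x$ uniformly bounded, so $x(\cdot)$ is uniformly continuous. Then I would show that \emph{every limit point of $x(t)$ is a fixed point}. Arguing by contradiction, if a limit point $x_0$ is not a fixed point then $W(x_0) > 0$, so $W \ge \delta > 0$ on a neighborhood $U$ of $x_0$; since $x(t_k) \to x_0$ along some $t_k \to \infty$ and $\dot x$ is bounded, the trajectory dwells in $U$ for at least a fixed positive time on each of infinitely many disjoint visits, so $L$ drops by a fixed positive amount infinitely often, contradicting boundedness below. (Equivalently: $\int_0^\infty W(x(s))\,ds \le \epsilon^{-1}(L(x(0)) - L_\infty) < \infty$, and a Barbalat-type argument gives $W(x(t)) \to 0$.)

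The next step upgrades this to convergence of $x(t)$ itself. By continuity of $L$, every limit point $x_0$ satisfies $L(x_0) = L_\infty$. By Lemma~\ref{fixed points} a fixed point has the form $x_0 = \abs{f}$ for a basic feasible solution $f$, and there $L(x_0) = E_{x_0}(q) + \cost(x_0) = 2\cost(f)$ (using $\abs{q} = x_0$ at a fixed point); since distinct basic feasible solutions have distinct cost, $L$ is \emph{injective} on the (discrete, by Corollary~\ref{discrete set}) set of fixed points. Hence exactly one fixed point has value $L_\infty$, so $x(t)$ has a unique limit point $x_0$ and, being bounded, converges to it: $x(t) \to x_0$. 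This injectivity is what lets me sidestep connectedness of the $\omega$-limit set, which is the usual (and fussier) route for non-autonomous systems.

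Finally, the ``moreover'' part. Convergence $\cost(x(t)) = c^Tx(t) \to c^Tx_0$ is immediate, and $E_{x(t)}(q(t)) = L(x(t)) - \cost(x(t)) \to L_\infty - c^Tx_0 = c^Tx_0$, using $L_\infty = L(x_0) = 2c^Tx_0$. For $\abs{q(t)} \to x_0$ I would use $W(x(t)) \to 0$ termwise: on $\supp(x_0)$, where $x_e(t)$ is bounded away from $0$, this forces $\abs{q_e(t)} - x_e(t) \to 0$, and on the remaining coordinates the hidden factor $x_e^{-2}$ in the $e$-th term of $W$ would blow up unless $\abs{q_e(t)} \to 0 = x_{0,e}$ as well (here the a priori bound $\abs{q_e} \le D\onenorm{b}$ is used). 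The main obstacle throughout is the non-autonomy: it forbids a direct appeal to LaSalle's invariance principle, and the remedy is exactly the trio of the autonomous Lyapunov function, the uniform dissipation bound furnished by $a_e \ge \epsilon$, and the injectivity of $L$ on fixed points; the one fiddly point to get right is the behavior of $q(t)$ as $x(t)$ approaches the boundary of the positive orthant.
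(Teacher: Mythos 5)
Your proposal is correct in its essential structure, and it reaches the conclusion by a genuinely more self-contained route than the paper. The paper's proof is four sentences: it invokes \cite[Corollary 2.6.5]{LaSalle} to conclude that $x(t)$ approaches the set $\{\dot{L}=0\}$, identifies that set with the fixed points via Corollary~\ref{fixed point = dot{L} = 0}, and then argues that because the fixed points are isolated (Lemma~\ref{fixed points}, Corollary~\ref{discrete set}) the trajectory must approach a single one of them. Two of your choices differ in substance. First, you replace the black-box invariance principle by the quantitative dissipation bound $\frac{d}{dt}L(x(t)) \le -\epsilon\,W(x(t))$ together with a Barbalat/dwelling-time argument; this is exactly where $a_e \ge \epsilon$ enters, it makes the non-autonomy of \lref{NUPD} explicitly harmless (the paper delegates this point entirely to the citation), and it correctly exploits the fact that $L$ and $W$ depend on $x$ alone. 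Second, to upgrade ``every limit point is a fixed point'' to ``there is exactly one limit point,'' you use injectivity of $L$ on the fixed-point set (from $L(\abs{f}) = 2\cost(f)$ and the distinct-cost assumption), whereas the paper tacitly relies on connectedness of the $\omega$-limit set of a bounded uniformly continuous trajectory combined with discreteness of the fixed points. Your injectivity shortcut is clean and avoids the connectedness argument altogether; what it costs is nothing here, since the distinct-cost assumption is already in force.

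The one step you flag but do not carry out --- and which the paper's own proof also leaves implicit --- is the behavior at the boundary of the positive orthant: a limit point $x_0$ may have $x_{0,e}=0$, and there neither $W$ nor $L$ is obviously (lower semi)continuous, so ``$W \ge \delta$ on a neighborhood of a non-fixed limit point'' and ``$L_\infty = L(x_0)$'' each need a supplementary argument. The needed control is available: $E_{x(t)}(q(t)) \le L(x(0))$ gives $q_e(t)^2 \le x_e(t)\,L(x(0))/c_e \to 0$ whenever $x_e(t)\to 0$, which tames the degenerate terms of $W$ and identifies the limiting flow as the minimum-energy flow supported on $\supp(x_0)$. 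With that spelled out, your proof is complete and, if anything, more careful than the original.
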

\begin{proof}
	The proof in~\cite{Bonifaci-Physarum} carries over. We include it for completeness. The existence of a Lyapunov function $L$ implies by \cite[Corollary 2.6.5]{LaSalle} that $x(t)$  approaches the set $\set{x \in \R_{\ge 0}^m}{\dot{L} = 0}$, which by Corollary~\ref{fixed point = dot{L} = 0} is the same as the set $\set{x \in \R_{\ge 0}^m}{\dot{x} = 0}$. Since this set consists of isolated points (Lemma~\ref{fixed points}), $x(t)$ must approach one of those points, say the point $x_0$. When $x = x_0$, one has $E_x(q) = E_x(x) = \cost(x) = c^T x$.
      \end{proof}

      The assumption $a_e(x,t) \ge \epsilon > 0$ is crucial as the following example shows. Let $n = m = 1$,  consider the task of minimizing $\abs{x}$ subject to the constraint $x = 1$, and let $a(x,t) = e^{-t}/2$ and $x(0) = 1/2$. Then
      $\dot{x} = e^{-t}(1 - x)/2$. Integrating from $0$ to $t$ and observing that 
      $x(t) \ge 1/2$ for all $t$, we obtain
      \[ x(t) = x(0) + \int_0^t \frac{e^{-s}}{2}(1 - x(s)) ds \le \frac{1}{2} + \frac{1}{4}
        \int_0^t e^{-s} ds \le \frac{1}{2} + \frac{1}{4} (1 - e^{-t}) \le \frac{3}{4} \]
      and hence the dynamics does not converge to the optimal solution $x^* = 1$, which, in this case, is the only fixed point. 

It remains to exclude that $x(t)$ converges to a non-optimal fixed point. We can do so under an additional assumption on $a(x,t)$.

\begin{theorem} 
  Assume further that $a_e(x,t)$ does not depend on $x$, i.e., $a_e(x,t) = a_e(t)$, $a_e(t) \ge \epsilon$ for some positive $\epsilon$ for all $e$ and $t$, and $\dot{a}_e(t)  \ge 0$ for all $e$ and $t$. 
	As $t \rightarrow \infty$, $x(t)$ converges to the optimal solution $x^*$. 
\end{theorem}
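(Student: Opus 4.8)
The plan is to argue by contradiction: I want to show that the trajectory cannot settle at a sub-optimal fixed point. By Corollary~\ref{convergence to fixed point} (which applies since $a_e(t)\ge\epsilon$), $x(t)$ converges to a fixed point $x_0 = \abs{f_0}$ for some basic feasible solution $f_0$ of $Af=b$, and $\abs{q(t)}\to x_0$, while $E_x(q)$ and $\cost(x)$ both converge to $c^T x_0$. So it suffices to prove $c^T x_0 = c^T x^*$, which by the distinct-cost assumption forces $x_0 = x^*$. Thus the whole content of part (e) is to exclude convergence to a sub-optimal basic feasible solution.

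First I would record the dual characterization of the optimum. Writing $\abs{f}\le x$ as $-x\le f\le x$ and dualizing~\lref{LP}, LP duality gives $c^T x^* = \max\set{b^T p}{\abs{A_e^T p}\le c_e \text{ for all } e}$. Let $p_0$ be a node potential associated with the limiting flow $q_0 = x_0$ via~\lref{definition of p}. The identity $E_x(q)=b^T p$ from the preliminaries gives $b^T p_0 = E_{x_0}(q_0) = \cost(x_0) = c^T x_0$. Hence $x_0$ is optimal if and only if $p_0$ is dual feasible, i.e.\ $\lambda_e := \abs{A_e^T p_0}/c_e \le 1$ for every $e$. By Lemma~\ref{fixed points} one has $\lambda_e = 1$ for each $e\in\supp(x_0)$, so if $x_0$ is sub-optimal the dual violation must occur on an edge $e^*$ with $x_{0,e^*}=0$ and $\lambda_{e^*}>1$.

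The contradiction then comes from the instability of such an edge. Using the form~\lref{rewritten dynamics} of the dynamics, $\frac{d}{dt}\ln x_{e^*} = a_{e^*}(t)\bigl(\abs{A_{e^*}^T p}/c_{e^*} - 1\bigr) \ge \epsilon\bigl(\abs{A_{e^*}^T p}/c_{e^*}-1\bigr)$. If $\abs{A_{e^*}^T p(t)}/c_{e^*}$ stays above some $1+\delta$ for all large $t$, then $\ln x_{e^*}(t)$ grows at least linearly, so $x_{e^*}(t)\to\infty$, contradicting both $x_{e^*}(t)\to x_{0,e^*}=0$ and the bound $x_{e^*}(t)\le\max(x_{e^*}(0),D\onenorm{b})$ established in Section~\ref{Existence}. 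This is where I would follow the strategy of~\cite{BeckerBonifaciKarrenbauerKolevMehlhorn}.

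The hard part will be the emphasized hypothesis: establishing $\liminf_{t\to\infty}\abs{A_{e^*}^T p(t)}/c_{e^*}>1$, i.e.\ that the dual violation on $e^*$ persists along the tail of the trajectory. The difficulty is that as $x(t)\to x_0$ the matrix $A R^{-1} A^T = \sum_e (x_e/c_e)A_e A_e^T$ loses the vanishing edges and may degenerate, so $p(t)$ need not converge and $\abs{A_{e^*}^T p(t)}$ could in principle relax back down toward $c_{e^*}$; the naive ``continuity of $p$ at $x_0$'' is not available on the boundary. I would handle this by analyzing $A_{e^*}^T p(t)$ directly rather than $p(t)$ itself, and it is precisely here that I expect to use the monotonicity assumption $\frac{d}{dt}a_e(t)\ge 0$ to control the approach to $x_0$ and keep the violation from decaying, while the distinct-cost assumption rules out cancellation among several competing sub-optimal bases. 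Once persistence of $\lambda_{e^*}>1$ is secured, the logarithmic-growth estimate of the previous paragraph closes the argument and yields $x_0=x^*$.
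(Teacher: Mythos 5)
Your reduction to excluding a sub-optimal limit fixed point, and the idea of deriving a contradiction from unbounded logarithmic growth, are both in the spirit of the paper's argument (which indeed follows~\cite{BeckerBonifaciKarrenbauerKolevMehlhorn}). But the step you yourself flag as ``the hard part'' --- establishing $\liminf_{t\to\infty}\abs{A_{e^*}^T p(t)}/c_{e^*} > 1$ for a \emph{single, fixed} edge $e^*$ --- is a genuine gap, and it is exactly the step your single-edge strategy cannot deliver. The only quantitative control available near the sub-optimal fixed point is the aggregate bound $b^T p(t) = E_x(q(t)) \ge \cost(x^*) + \delta$, and via $b^Tp=(Af^*)^Tp\le\sum_e x^*_e\abs{A_e^Tp}$ this only yields that \emph{at each time} the dual violation, weighted by $x^*$, is at least $\delta$ \emph{in total} over the edges of $\supp(x^*)\setminus\supp(x_0)$. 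If that set has more than one element, nothing prevents the violation from migrating among these edges as $t$ varies, so no individual $\lambda_{e^*}(t)$ need stay bounded away from $1$; neither the monotonicity $\dot a_e \ge 0$ nor the distinct-cost assumption plugs this hole in the way you hope.

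The paper sidesteps the issue by never isolating an edge: it sets $W(x(t)) = \sum_e \frac{x^*_e c_e}{a_e(t)} \ln\frac{x_e(t)}{B}$ with $B$ an upper bound on the (bounded) trajectory, so that $W \le 0$ always, and computes
\[
\dot W = \sum_e x^*_e c_e\,\frac{\abs{q_e}-x_e}{x_e} \;+\; \sum_e x^*_e c_e\Bigl(\frac{-\dot a_e}{a_e^2}\Bigr)\ln\frac{x_e}{B} \;\ge\; -\cost(x^*) + \sum_e x^*_e\abs{A_e^T p} \;\ge\; \delta
\]
for all sufficiently large $t$. The division by $a_e$ inside $W$ is what cancels the factor $a_e(t)$ in $\dot x_e/x_e$, and the hypothesis $\dot a_e \ge 0$ together with $\ln(x_e/B)\le 0$ makes the second term nonnegative. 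Thus the aggregate potential grows linearly while being bounded above --- a contradiction --- and the persistence question for any single edge never arises. To repair your write-up, replace the single-edge instability argument by this weighted-sum potential; trying instead to prove persistence of the violation on one edge would require essentially the same aggregation in disguise.
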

\begin{proof} 
	Assume that $x(t)$ converges to a non-optimal fixed point $z$. Let $x^*$ be the optimal solution, let $B$ be such that $x_e(t) \le B$ for all $e$ and $t$ (by Subsection~\ref{Existence} the solution is bounded), and let
	\[ W(x(t)) = \sum_e \frac{x^*_e c_e}{a_e} \ln \frac{x_e(t)}{B}. \]
	Let $\delta = (\cost(z) - \cost(x^*))/2$. 
	Then $E_x(q(t)) \ge \cost(z) - \delta =  \cost(x^*) + \delta$, for all sufficiently large $t$.
	Further, by definition $q_e = (x_e/c_e) A_e^T p$ and thus
	\[
		\dot{W}=\sum_{e}x_{e}^{*}c_{e}\frac{\abs{q_{e}}-x_{e}}{x_{e}} + \sum_{e}x_{e}^{*}c_{e}\left(\frac{-\dot{a}_{e}}{a_{e}^{2}}\right)\ln\frac{x_{e}}{B} \ge-\cost(x^{*})+\sum_{e}x_{e}^{*}\abs{A_{e}^{T}p}\geq\delta,
	\]
	where the first inequality follows from $\ln (x_e/B) \le 0$ and $\dot{a}_e \ge 0$,
	and the second inequality is due to
	\begin{equation}\label{eq:costxStarDelta}
		\sum_{e}x_{e}^{*}\abs{A_{e}^{T}p}\geq\sum_{e}x_{e}^{*}A_{e}^{T}p = b^{T}p = E_{x}(q)\geq\cost(x^{*})+\delta.
	\end{equation}
	Hence $W \rightarrow \infty$, a contradiction to the fact that $x$ is bounded.
\end{proof}

\section{Bonifaci's Refined Model}\label{Bonifaci's Refined Model}

Bonifaci~\cite{Bonifaci-RefinedModel} investigates the dynamics
\begin{equation*}            
	\dot{x}_e = x_e \left( g_e\left( \frac{\abs{q_e}}{x_e}\right) - 1 \right) \quad \text{for all $e \in [m]$}, 
\end{equation*}
      where the response function $g_e: \R_{\ge 0} \rightarrow \R_{\ge 0}$ is assumed to be an increasing differentiable function satisfying $g_e(1) = 1$. For the shortest path problem in a network of parallel links, Bonifaci shows convergence to an optimal solution. Bonifaci assumes the same response function for every edge, but his proof actually works for response functions depending on the edge. Concrete response functions of this type had been considered earlier in the literature:
      \begin{itemize}
      \item Non-saturating response: $g(y) = y^\mu$ for some $\mu > 0$.
      \item Saturating response: $g(y) = (1 + \alpha) y^\mu/(1 + \alpha y^\mu)$ for some $\mu, \alpha > 0$.
      \end{itemize}

      \begin{lemma} 
      	$L(x,t) = p^T b + c^T x$ is a Lyapunov function for the dynamics~\lref{RefinedDynamics}. Moreover $L(x,t) = 0$ if and only if $x$ is a fixed point of~\lref{RefinedDynamics}. 
      \end{lemma}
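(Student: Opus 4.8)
The plan is to follow the proof of Lemma~\ref{Lyapunov Lemma} almost verbatim, isolating the single place where the specific form of the dynamics enters. First I would observe that the identity
\[ \frac{d}{dt} p^T b = - \sum_e \frac{(A_e^T p)^2}{c_e}\,\dot{x}_e \]
established in~\lref{eq:dt_btp} makes \emph{no} use of the form of $\dot{x}_e$: it is obtained purely by differentiating $p^T b = p^T A R^{-1} A^T p$ with the product rule and substituting~\lref{helper}. Hence it remains valid for~\lref{RefinedDynamics} and can be reused unchanged.

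Next I would express the right-hand side of~\lref{RefinedDynamics} through the potentials. Since $q_e = (x_e/c_e) A_e^T p$ by~\lref{definition of q}, we have $\abs{q_e}/x_e = \abs{A_e^T p}/c_e$; writing $\lambda_e = \abs{A_e^T p}/c_e \ge 0$ gives $\dot{x}_e = x_e\bigl(g_e(\lambda_e) - 1\bigr)$ and $(A_e^T p)^2/c_e = c_e \lambda_e^2$. Substituting into the displayed identity yields $\frac{d}{dt} p^T b = -\sum_e c_e x_e \lambda_e^2 \bigl(g_e(\lambda_e) - 1\bigr)$, while the second summand contributes $\frac{d}{dt} c^T x = \sum_e c_e x_e \bigl(g_e(\lambda_e) - 1\bigr)$ exactly as in~\lref{eq:dt_ctx}. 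Adding the two gives
\[ \frac{d}{dt} L(x,t) = - \sum_e c_e x_e \bigl(g_e(\lambda_e) - 1\bigr)\bigl(\lambda_e^2 - 1\bigr). \]

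The sign analysis now replaces the factorization $(\lambda_e + 1)(\lambda_e - 1)^2 \ge 0$ used before. The relevant fact is that $g_e$ is increasing with $g_e(1) = 1$, so $g_e(\lambda_e) - 1$ and $\lambda_e - 1$ always carry the same (weak) sign; since $\lambda_e^2 - 1 = (\lambda_e + 1)(\lambda_e - 1)$ with $\lambda_e + 1 > 0$, each product $\bigl(g_e(\lambda_e) - 1\bigr)\bigl(\lambda_e^2 - 1\bigr)$ is $\ge 0$. Together with $c_e > 0$ and $x_e \ge 0$ this gives $\frac{d}{dt} L \le 0$. For the equality clause I would show that $\frac{d}{dt} L = 0$ is equivalent to the fixed-point condition $x_e\bigl(g_e(\lambda_e) - 1\bigr) = 0$ for all $e$: one direction is immediate since any fixed point makes each summand vanish, and for the converse, if $x_e\bigl(g_e(\lambda_e)-1\bigr)\bigl(\lambda_e^2-1\bigr)=0$ while $x_e\bigl(g_e(\lambda_e)-1\bigr)\neq 0$, then $x_e > 0$ and $g_e(\lambda_e)\neq 1$, forcing $\lambda_e \neq 1$ and hence $\lambda_e^2 - 1 \neq 0$, a contradiction.

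The main (and essentially only) delicate point is this monotonicity step. It is what lets the merely increasing, possibly non-strict function $g_e$ still obey a clean sign law and, crucially, guarantees the implication $g_e(\lambda_e) \neq 1 \Rightarrow \lambda_e \neq 1$; everything else is a transcription of Lemma~\ref{Lyapunov Lemma} with the cubic $\lambda_e^3 - \lambda_e^2 - \lambda_e + 1$ replaced by $\bigl(g_e(\lambda_e)-1\bigr)\bigl(\lambda_e^2-1\bigr)$.
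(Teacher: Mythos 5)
Your proposal is correct and follows essentially the same route as the paper's proof: it reuses the identity~\lref{eq:dt_btp}, arrives at the same expression $\frac{d}{dt}L = -\sum_e c_e x_e \bigl(g_e(\lambda_e)-1\bigr)\bigl(\lambda_e^2-1\bigr)$, and uses the same sign argument based on $g_e$ being increasing with $g_e(1)=1$. Your handling of the equality case (arguing directly that each vanishing summand forces $\dot{x}_e = 0$, rather than forcing $\lambda_e = 1$) is if anything slightly more careful than the paper's when $g_e$ is only weakly increasing.
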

      \begin{proof} 
   	We proceed as in the proof of Lemma~\ref{Lyapunov Lemma}.
       Let $\lambda_e=\abs{A_e^T p}/c_e$ and note that $|q_e|/x_e=\lambda_e\geq0$. Then, we have   	
        \begin{align*}
          \frac{d}{dt} L(x,t) &= \frac{d}{dt} \left(p^T b + c^T x \right)
        \overset{\lref{eq:dt_btp}}{=}-\sum_{e}\left(\frac{(A_{e}^{T}p)^{2}}{c_{e}}-c_{e}\right)\dot{x}_{e}\\
        &= -\sum_{e}c_{e}x_{e}\left(\lambda_{e}^{2}-1\right)\left(g_{e}\left(\lambda_{e}\right)-1\right)\\
        & \le 0,
      \end{align*}
      where the inequality follows by $g_e(1) = 1$ and $g_e$ is an increasing function implies that the terms $\lambda_{e}^2 - 1$ and $g_e(\lambda_{e}) - 1$ have the same sign.

      Moreover, the derivative is zero if and only if for all $e$, either $x_e = 0$ or 
      $\lambda_{e}=1$ (as $\lambda_{e}\geq0$). 
      Since the latter condition is equivalent to $\abs{A_e^T p} = c_e$,
      it follows for every $e$ with $x_e \neq 0$ that 
      $\abs{q_e} = x_e$ or equivalently $\dot{x}_e = 0$.
    \end{proof}

    We remark that the proof above would even work for transfer-functions $g_e(x,t,y)$. 
    It is only important that $g_e(x,t,1) = 1$ and that the function is increasing in $y$. 

It now follows from the general theory of dynamical systems that $x(t)$ converges to a fixed point. 

\begin{corollary}[Generalization of Corollary 3.3. in~\cite{Bonifaci-Physarum}.]
As $t \rightarrow \infty$, $x(t)$ and $\abs{q}(t)$ approach a fixed point $x_0$. Moreover, $E_x(q)$ and $\cost(x)$ converge to $c^T x_0$. 
\end{corollary}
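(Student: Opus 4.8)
The plan is to mirror the proof of Corollary~\ref{convergence to fixed point}: combine the Lyapunov lemma just established with LaSalle's invariance principle and the discreteness of the fixed-point set. First I would check that the trajectory stays in a compact set, so that LaSalle applies. By part~(a) of Theorem~\ref{Bonifaci main theorem} the solution satisfies $x(t) > 0$ on $[0,\infty)$, and by property~(ii) of the minimum energy solution we have $\abs{q_e(t)} \le D\onenorm{b}$. Hence, whenever $x_e > D\onenorm{b}$ the ratio $\abs{q_e}/x_e$ is below $1$, so $g_e(\abs{q_e}/x_e) < g_e(1) = 1$ (using that $g_e$ is increasing), giving $\dot{x}_e < 0$. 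Thus each coordinate satisfies $x_e(t) \le \max(x_e(0), D\onenorm{b})$ and $x(t)$ remains in a fixed compact box $[0,B]^m$, exactly as in the uniform case.

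Next I would invoke the Lyapunov lemma proved immediately above: $L$ is a Lyapunov function for~\lref{RefinedDynamics} and $\frac{d}{dt}L = 0$ precisely on the fixed-point set. By LaSalle's invariance principle~\cite[Corollary 2.6.5]{LaSalle}, $x(t)$ then approaches the set $\set{x \in \R_{\ge 0}^m}{\frac{d}{dt}L = 0} = \set{x \in \R_{\ge 0}^m}{\dot{x} = 0}$. The crucial observation is that this set is discrete. Since each $g_e$ is increasing with $g_e(1) = 1$, the equation $g_e(y) = 1$ holds if and only if $y = 1$; therefore $\dot{x}_e = x_e\bigl(g_e(\abs{q_e}/x_e) - 1\bigr) = 0$ if and only if $x_e = 0$ or $\abs{q_e} = x_e$. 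This is exactly the fixed-point condition of the uniform dynamics~\lref{PD}, so the argument of Lemma~\ref{fixed points}, which depends only on this condition together with the distinctness of costs, shows that the fixed points of~\lref{RefinedDynamics} are precisely the vectors $\abs{f}$ for basic feasible solutions $f$ of~\lref{LP}. As there are only finitely many such $f$, the fixed-point set is finite, hence discrete (as in Corollary~\ref{discrete set}). A bounded trajectory approaching a discrete set must converge to one of its points, so $x(t) \to x_0$ for some fixed point $x_0$, and by continuity of $q$ in $x$ we obtain $\abs{q(t)} \to x_0$ as well.

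Finally, at the limit the fixed-point condition gives $\abs{q} = x_0$ on $\supp(x_0)$, so $E_{x_0}(q) = E_{x_0}(x_0) = \cost(x_0) = c^T x_0$, and by continuity $E_x(q)$ and $\cost(x)$ converge to $c^T x_0$. I expect the main obstacle to be the discreteness step: one must verify that the fixed-point characterization of~\cite{BeckerBonifaciKarrenbauerKolevMehlhorn} transfers verbatim to~\lref{RefinedDynamics}, which it does precisely because $g_e$ increasing with $g_e(1)=1$ forces the zero set of $\dot{x}$ to coincide with that of~\lref{PD}. The only other point needing care is confirming precompactness of the trajectory so that LaSalle's theorem is legitimately applicable; the boundedness argument above settles this.
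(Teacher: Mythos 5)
Your proof is correct and takes essentially the same route as the paper, which simply reuses the proof of Corollary~\ref{convergence to fixed point}: the Lyapunov function, LaSalle's invariance principle, and the discreteness of the fixed-point set. You additionally spell out the boundedness of the trajectory and the transfer of the fixed-point characterization to the dynamics~\lref{RefinedDynamics}, details the paper delegates to parts (a) and (b) of Theorem~\ref{Bonifaci main theorem}.
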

\begin{proof} Same proof as Corollary~\ref{convergence to fixed point}. \end{proof}

We finally show convergence to the optimum solution of~\lref{LP} under the additional assumption that $g_e(y) \ge 1 + \alpha(y - 1)$ for some $\alpha > 0$ and all $y$ and $e$.

\begin{theorem} 
  Assume further that $g_e(y) \ge 1 + \alpha(y - 1)$ for some $\alpha > 0$ and all $e$ and $y$.
	As $t \rightarrow \infty$, $x(t)$ converges to the optimal solution $x^*$. 
\end{theorem}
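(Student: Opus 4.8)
The plan is to mirror the convergence argument used for the non-uniform dynamics in Section~\ref{Proofs}: argue by contradiction and exhibit a secondary potential that would have to grow without bound if $x(t)$ converged to a suboptimal fixed point. By the preceding corollary we already know that $x(t)$ and $\abs{q(t)}$ converge to \emph{some} fixed point $z = \abs{f}$ with $f$ a basic feasible solution, and that $E_x(q) \to \cost(z) = c^T z$. Suppose, for contradiction, that $z \ne x^*$. Then $\cost(z) > \cost(x^*)$ by uniqueness of the optimum, and setting $\delta = (\cost(z) - \cost(x^*))/2 > 0$ we obtain $E_x(q(t)) \ge \cost(x^*) + \delta$ for all sufficiently large $t$.

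Next I would introduce the auxiliary function, analogous to the $W$ used earlier but now without the $a_e$ weights, namely
\[ W(x(t)) = \sum_e x^*_e c_e \ln \frac{x_e(t)}{B}, \]
where $B$ bounds $x_e(t)$ from above (boundedness follows from Subsection~\ref{Existence}). Since $x_e \le B$, every summand is $\le 0$, so $W$ is bounded above by $0$. Differentiating and using $\dot{x}_e/x_e = g_e(\lambda_e) - 1$ with $\lambda_e = \abs{q_e}/x_e = \abs{A_e^T p}/c_e \ge 0$ gives
\[ \dot{W} = \sum_e x_e^* c_e \left( g_e(\lambda_e) - 1 \right). \]

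The crux is the linear lower bound $g_e(y) \ge 1 + \alpha(y-1)$, which yields $g_e(\lambda_e) - 1 \ge \alpha(\lambda_e - 1)$ termwise; multiplying by the nonnegative weights $x_e^* c_e$ preserves the inequality, so
\[ \dot{W} \ge \alpha \sum_e x_e^* c_e(\lambda_e - 1) = \alpha\left( \sum_e x_e^* \abs{A_e^T p} - \cost(x^*) \right) \ge \alpha \delta > 0, \]
where the last step is exactly the chain~\lref{eq:costxStarDelta} together with $E_x(q) \ge \cost(x^*) + \delta$. Hence $W(t) \to \infty$, contradicting $W \le 0$; therefore $z = x^*$.

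The main obstacle, and the only place where the extra hypothesis is genuinely needed, is controlling both the sign and the magnitude of $g_e(\lambda_e) - 1$ relative to $\lambda_e - 1$ simultaneously in the regimes $\lambda_e > 1$ and $\lambda_e < 1$: the linear lower bound is precisely what guarantees $\dot{W}$ is bounded away from zero by a fixed positive constant. Monotonicity of $g_e$ with $g_e(1)=1$ alone only gives that $g_e(\lambda_e)-1$ and $\lambda_e-1$ share a sign, which suffices for the Lyapunov function $L$ but not for the strict growth of $W$. A minor bookkeeping point to verify is the sign-normalization (flipping columns of $A$ so that $x^* = f^* \ge 0$), which is what makes $\sum_e x_e^* A_e^T p = b^T p$ hold in~\lref{eq:costxStarDelta}.
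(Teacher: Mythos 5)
Your proposal is correct and follows essentially the same route as the paper's own proof: the contradiction via the potential $W(x(t)) = \sum_e x^*_e c_e \ln x_e(t)$ (your division by $B$ only shifts $W$ by a constant and makes the upper bound $W \le 0$ explicit), the termwise application of $g_e(y) \ge 1 + \alpha(y-1)$, and the lower bound $\dot{W} \ge \alpha\delta$ via~\lref{eq:costxStarDelta}. Your closing remarks on why monotonicity alone is insufficient and on the sign bookkeeping in~\lref{eq:costxStarDelta} are accurate but not needed beyond what the paper already does.
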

\begin{proof} 
	Assume that $x(t)$ converges to a non-optimal fixed point $z$. Let $x^*$ be the optimal solution and let \[W(x(t)) = \sum_e x^*_e c_e\ln x_e(t).\]
	Let $\delta = (\cost(z) - \cost(x^*))/2$. 
	Then $E_x(q(t)) \ge \cost(z) - \delta = \cost(x^*) + \delta$, for all sufficiently large $t$.
	Further, by definition $q_e = (x_e/c_e) A_e^T p$ and thus
	\begin{align*}
          \dot{W} &= \sum_e x^*_e c_e \frac{x_e \left( g_e\left(\frac{\abs{q_e}}{x_e}\right) - 1\right) }{x_e}
          \ge \sum_e x^*_e c_e \alpha\left(\frac{\abs{A_e^T p}}{c_e} - 1\right)\\
		&\ge \alpha \cdot \left( - \cost(x^*) + \sum_e x^*_e \abs{A_e^T p}  \right)
		 \geq \alpha\cdot \delta,
	\end{align*}
	where the first inequality follows from $f(y) \ge 1 + \alpha(y - 1)$ for all $y$ and the last inequality follows from \lref{eq:costxStarDelta}.
	Hence $W \rightarrow \infty$, a contradiction to the fact that $x$ is bounded.
\end{proof}

      We note that convex increasing functions satisfy $g(y) \ge g(1) + \alpha(y - 1)$ with $\alpha = g'(1)$.

      \section{Open Problems}\label{Open Problems}
      For the dynamics \lref{NUPD}, we showed convergence to the optimal solution 
      under the assumptions:
      \begin{compactenum}[\mbox{}\hspace{\parindent}(i)]
      	\item $a_e(x,t)$ is bounded and bounded away from zero;
      	\item $a_e(x,t) = a_e(t)$ does not depend on the state $x$;
      	\item $\dot{a}_e \ge 0$ always.
      \end{compactenum}
      We argued that assumption (i) is necessary. How about assumptions (ii) and (iii)?

      For the uniform dynamics, convergence of a suitable Euler discretization was shown in~\cite{Physarum-Complexity-Bounds, SV-IRLS} for the shortest path problem and the basis pursuit problem respectively. What can be said about the convergence of the discretization of the non-uniform dynamics?

      Our proof that Bonifaci's refined model converges to the optimum solution requires the additional assumption that $g_e(y) \ge 1 + \alpha(y - 1)$ for some $\alpha > 0$ and all $e$ and $y \ge 0$. Can this condition be relaxed?
      
      There is also the directed dynamics $\dot{x} = q - x$ considered in~\cite{Ito-Convergence-Physarum,SV-LP}. 
      Can convergence be shown for its non-uniform version? 
      This question is answered affirmatively in~\cite{Facca-Karrenbauer-Kolev-Mehlhorn:NonUniformPhysarum}.


      \renewcommand{\htmladdnormallink}[2]{#1}

\newcommand{\etalchar}[1]{$^{#1}$}

\end{document}